\def\BibTeX{{\rm B\kern-.05em{\sc i\kern-.025em b}\kern-.08em
    T\kern-.1667em\lower.7ex\hbox{E}\kern-.125emX}}
\newtheorem{claim}{Claim}
\newtheorem{theorem}{Theorem}
\newtheorem{definition}{Definition}
\newtheorem{lemma}{Lemma}
\newtheorem{remarks}{Remark}
\newtheorem{example}{Example}
\title{Exact Regeneration Codes for Distributed Storage Repair Using Interference Alignment}
\author{Changho Suh and Kannan Ramchandran \\
Wireless Foundations, Department of EECS \\
University of California at Berkeley \\
Email: \{chsuh, kannanr\}@eecs.berkeley.edu}
\begin{document}

\IEEEaftertitletext{
\begin{abstract}
The high repair cost of $(n,k)$ Maximum Distance Separable (MDS) erasure codes has recently motivated a new class of codes, called {\em Regenerating Codes}, that optimally trade off storage cost for repair bandwidth. On one end of this spectrum of Regenerating Codes are Minimum Storage Regenerating (MSR) codes that can match the minimum storage cost of MDS codes while also significantly reducing repair bandwidth. In this paper, we describe {\em Exact}-MSR codes which allow for \emph{any} failed nodes (\emph{whether they are systematic or parity nodes}) to be regenerated {\em exactly} rather than only functionally or information-equivalently. We show that Exact-MSR codes {\em  come with no loss of optimality} with respect to random-network-coding based MSR codes (matching the cutset-based lower bound on repair bandwidth) for the cases of: $(a)$ $k/n \leq 1/2$; and $(b)$ $k \leq 3$. Our constructive approach is based on {\em interference alignment} techniques, and, unlike the previous class of random-network-coding based approaches, we provide explicit and deterministic coding schemes that require a finite-field size of \emph{at most} $2(n-k)$.

\end{abstract}
\begin{keywords}
Interference Alignment, Minimum Storage Regenerating (MSR) Codes, Repair Bandwidth
\end{keywords}
}

\maketitle

%

\section{Introduction}

In distributed storage systems, maximum distance separable (MDS) erasure codes are well-known coding schemes that can offer maximum reliability for a given storage overhead. For an  $(n,k)$ MDS code for storage, a source file of size $\mathcal{M}$ bits is divided equally into $k$ units (of size $\frac{\mathcal{M}}{k}$ bits each), and these $k$ data units are expanded into $n$ encoded units, and stored at $n$ nodes. The code guarantees that a user or Data Collector (DC) can reconstruct the source file by connecting to any arbitrary $k$ nodes. In other words, any $(n-k)$ node failures can be tolerated with a minimum storage cost of  $\frac{\mathcal{M}}{k}$ at each of $n$ nodes. While MDS codes are optimal in terms of reliability versus storage overhead, they come with a significant maintenance overhead when it comes to repairing failed encoded nodes to restore the MDS system-wide property.
Specifically, consider failure of a single encoded node and the cost needed to restore this node. It can be shown that this repair incurs an aggregate cost of $\mathcal{M}$ bits of information from $k$ nodes. Since each encoded unit contains only $\frac{\mathcal{M}}{k}$ bits of information, this represents a $k$-fold inefficiency with respect to the repair bandwidth.

This challenge has motivated a new class of coding schemes, called \emph{Regenerating Codes} \cite{Dimakis:INFOCOM, Wu:Allerton}, which target the information-theoretic optimal tradeoff between storage cost and repair bandwidth. On one end of this spectrum of Regenerating Codes are Minimum Storage Regenerating (MSR) codes that can match the minimum storage cost of MDS codes while also significantly reducing repair bandwidth. As shown in \cite{Dimakis:INFOCOM, Wu:Allerton}, the fundamental tradeoff between bandwidth and storage depends on the number of nodes that are connected to repair a failed node, simply called the degee $d$ where $k \leq d \leq n-1$. The optimal tradeoff is characterized by
\begin{align}
\label{eq-MSRpoint}
 ( \alpha, \gamma) = \left( \frac{\mathcal{M}}{k}, \frac{\mathcal{M}}{k} \cdot \frac{d}{d-k+1}, \right),
\end{align}
where $\alpha$ and $\gamma$ denote the optimal storage cost and repair bandwidth, respectively for repairing a single failed node, while retaining the MDS-code property for the user.
Note that this code requires the same minimal storage cost (of size $\frac{\mathcal{M}}{k}$) as that of conventional MDS codes, while substantially reducing repair bandwidth by a factor of $\frac{k(d-k+1)}{d}$ (e.g., for $(n,k,d)=(31,6,30)$, there is a $5$x bandwidth reduction). In this paper, without loss of generality, we normalize the repair-bandwidth-per-link $(\frac{\gamma}{d})$ to be 1, making $\mathcal{M}=k(d-k+1)$. One can partition a whole file into smaller chunks so that each has a size of $k(d-k+1)$\footnote{In practice, the order of a file size is of $10^3$ (Kb) $\sim$ $10^9$ (Gb). Hence, it is reasonable to consider this arbitrary size of the chunk.}.
%
%
%



While MSR codes enjoy substantial benefits over MDS codes, they come with some limitations in construction. Specifically, the achievable schemes in \cite{Dimakis:INFOCOM, Wu:Allerton} that meet the optimal tradeoff bound of (\ref{eq-MSRpoint}) restore failed nodes in a \emph{functional} manner only, using a random-network-coding based framework. This means that the replacement nodes maintain the MDS-code property (that any $k$ out of $n$ nodes can allow for the data to be reconstructed) but do not \emph{exactly} replicate the information content of the failed nodes.

Mere functional repair can be limiting. First, in many applications of interest, there is a need to maintain the code in  \emph{systematic} form, i.e., where  the user data in the form of $k$ information units are exactly stored at $k$ nodes and parity information (mixtures of $k$ information units) are stored at the remaining $(n-k)$ nodes. Secondly, under functional repair, additional overhead information needs to be exchanged for \emph{continually} updating repairing-and-decoding rules whenever a failure occurs. This can significantly increase system overhead. A third problem is that the random-network-coding based solution of \cite{Dimakis:INFOCOM} can require a huge finite-field size, which can significantly increase the computational complexity of encoding-and-decoding\footnote{In \cite{Dimakis:INFOCOM}, Dimakis-Godfrey-Wu-Wainwright-Ramchandran translated the regenerating-codes problem into a multicast communication problem where random-network-coding-based schemes require a huge field size especially for large networks. In storage problems, the field size issue is further aggravated by the need to support a dynamically expanding network size due to the need for continual repair.}. Lastly, functional repair is undesirable in storage security applications in the face of eavesdroppers. In this case, information leakage occurs continually due to the dynamics of repairing-and-decoding rules that can be potentially observed by eavesdroppers \cite{Sameer:ISIT2010}.

These drawbacks motivate the need for \emph{exact} repair of failed nodes. This leads to the following question: is there a price for attaining the optimal tradeoff of (\ref{eq-MSRpoint}) with the extra constraint of exact repair? The work in \cite{KumarRamchandran_MSR} sheds some light on this question: specifically, it was shown that under \emph{scalar linear} codes\footnote{In scalar linear codes, symbols are not allowed to be split into arbitrarily small sub-symbols as with vector linear codes. This is equivalent to having large block-lengths in the classical setting. Under non-linear and vector linear codes, whether or not the optimal tradeoff can be achieved for this regime remains open.}, when  $\frac{k}{n} > \frac{1}{2} + \frac{2}{n}$, there \emph{is} a price for exact repair. For large $n$, this case boils down to $\frac{k}{n} > \frac{1}{2}$, i.e., redundancy less than two. Now what about for $\frac{k}{n} \leq \frac{1}{2}$? This paper resolves this open problem and \emph{shows that it is indeed possible to attain the optimal tradeoff of (\ref{eq-MSRpoint}) for the case of $\frac{k}{n} \leq \frac{1}{2}$ (and $d \geq 2k - 1$), while also guaranteeing exact repair}. Furthermore, we show that for the special case of $k \leq 3$, there is no price for exact repair, regardless of the value of $n$. The interesting special case in this class is the $(5,3)$ Exact-MSR code\footnote{Independently, Cullina-Dimakis-Ho in \cite{Cullina_MSR} found $(5,3)$ E-MSR codes defined over ${\sf GF}(3)$, based on a search algorithm.}, which is not covered by the first case of $\frac{k}{n} \leq \frac{1}{2}$.

Our achievable scheme builds on the concept of \emph{interference alignment}, which was introduced in the context of wireless communication networks \cite{Mohammad,Jafar:IC}.
The idea of interference alignment is to align multiple interference signals in a signal subspace whose dimension is smaller than the number of interferers. Specifically, consider the following setup where a decoder has to decode one desired signal which is linearly interfered with by two separate undesired signals. How many linear equations (relating to the number of channel uses) does the decoder need to recover its desired input signal? As the aggregate signal dimension spanned by desired and undesired signals is at most three, the decoder can naively recover its signal of interest with access to three linearly independent equations in the three unknown signals. However, as the decoder is interested in only one of the three signals, it can decode its desired unknown signal even if it has access to only two equations, provided the two undesired signals are judiciously aligned in a 1-dimensional subspace. See \cite{Mohammad,Jafar:IC,Suh:Allerton} for details.

We will show in the sequel how this concept relates intimately to our repair problem. At a high level, the connection comes from our repair problem involving recovery of a subset (related to the subspace spanned by a failed node) of the overall aggregate signal space (related to the entire user data dimension). There are, however, significant differences some beneficial and some detrimental. On the positive side, while in the wireless problem, the equations are provided by \emph{nature} (in the form of channel gain coefficients), in our repair problem, the coefficients of the equations are \emph{man-made} choices, representing a part of the overall design space. On the flip side, however, the MDS requirement of our storage code and the multiple failure configurations that need to be simultaneously addressed with a single code design generate multiple interference alignment constraints that need to be simultaneously satisfied. This is particularly acute for a large value of $k$, as the number of possible failure configurations increases with $n$ (which increases with $k$). Finally, another difference comes from the finite-field constraint of our repair problem.

We propose a \emph{common-eigenvector} based conceptual framework (explained in Section~\ref{sec-BasisFramework}) that covers all possible failure configurations. Based on this framework, we develop an interference alignment design technique for exact repair. We also propose another interference alignment scheme for a $(5,3)$ code\footnote{The finite-field nature of the problem makes this challenging.}, which in turn shows the optimality of the cutset bound (\ref{eq-MSRpoint}) for the case $k \leq 3$. As in \cite{KumarRamchandran_MSR}, our coding schemes are deterministic and require a field size of \emph{at most} $2(n-k)$. This is in stark contrast to the random-network-coding based solutions \cite{Dimakis:INFOCOM}.

\begin{figure}[t]
\begin{center}
{\epsfig{figure=./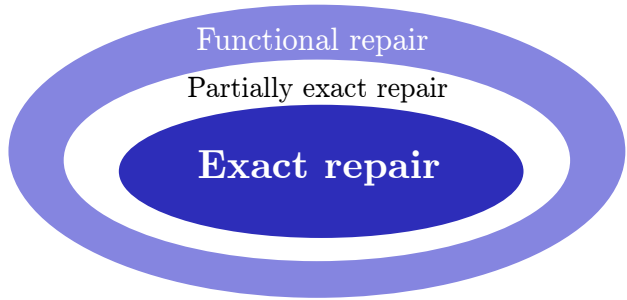, angle=0, width=0.5\textwidth}}
\end{center}
\caption{Repair models for distributed storage systems. In exact repair, the failed nodes are exactly regenerated, thus restoring lost encoded fragments with their exact replicas. In functional repair, the requirement is relaxed: the newly generated node can contain different data from that of the failed node as long as the repaired system maintains the MDS-code property. In partially exact repair, only systematic nodes are repaired exactly, while parity nodes are repaired only functionally.}\label{fig:VanDiagram}
\end{figure}

\section{Connection to Related Work}
As stated earlier, Regenerating Codes, which cover an entire spectrum of optimal tradeoffs between repair bandwidth and storage cost, were introduced in \cite{Dimakis:INFOCOM, Wu:Allerton}. As discussed, MSR codes occupy one end of this spectrum corresponding to minimum storage. At the other end of the spectrum live Minimum Bandwidth Regenerating (MBR) codes corresponding to minimum repair bandwidth. The optimal tradeoffs described in \cite{Dimakis:INFOCOM, Wu:Allerton} are based on random-network-coding based approaches,  which guarantee only functional repair.

The topic of exact repair codes has received attention in the recent literature \cite{Wu:ISIT, KumarRamchandran_MBR, KumarRamchandran_MSR, Cullina_MSR, Wu:PartialExact}. Wu and Dimakis in \cite{Wu:ISIT} showed that the MSR point (\ref{eq-MSRpoint}) can be attained for the cases of: $k=2$ and $k=n-1$. Rashmi-Shah-Kumar-Ramchandran in  \cite{KumarRamchandran_MBR} showed that for $d=n-1$, the optimal MBR point can be achieved with a deterministic scheme requiring a small finite-field size and zero repair-coding-cost. Subsequently, Shah-Rashmi-Kumar-Ramchandran in \cite{KumarRamchandran_MSR} developed partially exact codes for the MSR point corresponding to $ \frac{k}{n} \leq \frac{1}{2} + \frac{2}{n}$, where exact repair is limited to the systematic component of the code. See Fig.~\ref{fig:VanDiagram}. Finding the fundamental limits under exact repair of \emph{all} nodes (including parity) remained an open problem. A key contribution of this paper is to resolve this open problem by showing that E-MSR codes come with no extra cost over the optimal tradeoff of (\ref{eq-MSRpoint}) for the case of $\frac{k}{n} \leq \frac{1}{2}$ (and $d  \geq 2k - 1$). For the most general case, finding the fundamental limits under exact repair constraints for all values of $(n,k,d)$ remains an open problem.

The constructive framework proposed in \cite{KumarRamchandran_MSR} forms the inspiration for our proposed solution in this paper. Indeed, we show that the code introduced in [4] for exact repair of only the systematic nodes can also be used to repair the non-systematic (parity) node failures exactly \emph{provided repair construction schemes are appropriately designed}. This design for ensuring exact repair of \emph{all} nodes is challenging and had remained an open problem: resolving this for the case of $\frac{k}{n} \leq \frac{1}{2}$ (and $d \geq 2k-1$) is a key contribution of this work. Another contribution of our work is the systematic development of a generalized family of code structures (of which the code structure of \cite{KumarRamchandran_MSR} is a special case), together with the associated optimal repair construction schemes. This generalized family of codes provides conceptual insights into the structure of solutions for the exact repair problem, while also opening up a much larger constructive design space of solutions.



\section{Interference Alignment for Distributed Storage Repair}
\label{sec-Notations}
Linear network coding \cite{Koetter:it,ahlswede:it} (that allows multiple messages to be linearly combined at network nodes) has been established recently as a useful tool for addressing interference issues even in wireline networks where all the communication links are orthogonal and non-interfering. This attribute was first observed in \cite{Wu:ISIT}, where it was shown that interference alignment could be exploited for storage networks, specifically for minimum storage regenerating (MSR) codes having small $k$ ($k=2$). However, generalizing interference alignment to large values of $k$ (even $k=3$) proves to be challenging, as we describe in the sequel. In order to appreciate this better, let us first review the scheme of \cite{Wu:ISIT} that was applied to the exact repair problem. We will then address the difficulty of extending interference alignment for larger systems and describe how to address this in Section \ref{sec-BasisFramework}.

\subsection{Review of $(4,2)$ E-MSR Codes \cite{Wu:ISIT}}
Fig. \ref{fig:42example} illustrates an interference alignment scheme for a $(4,2)$ MDS code defined over ${\sf GF}(5)$. First one can easily check the MDS property of the code, i.e., all the source files can be reconstructed from any $k(=2)$ nodes out of $n(=4)$ nodes. Let us see how failed node 1 (storing $(a_1, a_2)$) can be exactly repaired. We assume that the degree $d$ (the number of storage nodes connected to repair a failed node) is $3$, and a source file size  $\mathcal{M}$ is $4$. The cutset bound (\ref{eq-MSRpoint}) then gives the fundamental limits of: storage cost $\alpha = 2$; and repair-bandwidth-per-link $\frac{\gamma}{d}=1$.

The example illustrated in Fig. \ref{fig:42example} shows that the parameter set described above is achievable using interference alignment. Here is a summary of the scheme. First notice that since the bandwidth-per-link is 1, two symbols in each storage node are projected into a scalar variable with projection weights. Choosing appropriate weights, we get the equations as shown in Fig. \ref{fig:42example}: $(b_1 + b_2)$; $a_1 + 2a_2 + (b_1 + b_2)$; $2 a_1 + a_2 + (b_1 + b_2)$. Observe that the undesired signals $(b_1,b_2)$ (interference) are aligned onto an 1-dimensional linear subspace, thereby achieving \emph{interference alignment}. Therefore, we can successfully decode $(a_1,a_2)$ with three equations although there are four unknowns.
\begin{figure}[t]
\begin{center}
{\epsfig{figure=./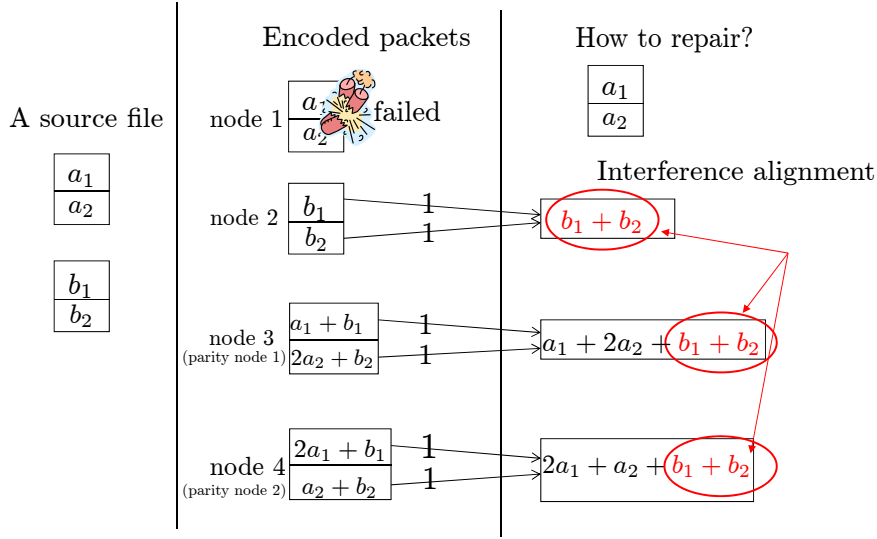, angle=0, width=0.7\textwidth}}
\end{center}
\caption{Interference alignment for a $(4,2)$ E-MSR code defined over ${\sf GF}(5)$ \cite{Wu:ISIT}. Choosing appropriate projection weights, we can align interference space of $(b_1, b_2)$ into one-dimensional linear space spanned by $[1,\; 1]^t$. As a result, we can successfully decode 2 desired unknowns $(a_1,a_2)$ from 3 equations containing 4 unknowns $(a_1,a_2,b_1,b_2)$.} \label{fig:42example}

\end{figure}
Similarly, we can repair $(b_1,b_2)$ when it has failed.

\begin{figure}[t]
\begin{center}
{\epsfig{figure=./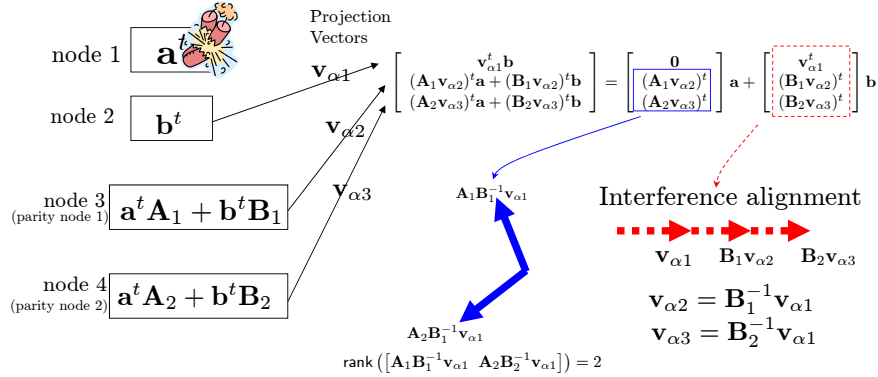, angle=0, width=0.7\textwidth}}
\end{center}
\caption{Geometric interpretation of interference alignment. The blue solid-line and red dashed-line vectors indicate linear subspaces with respect to ``$\mathbf{a}$'' and ``$\mathbf{b}$'', respectively. The choice of $\mathbf{v}_{\alpha 2} = \mathbf{B}_1^{-1} \mathbf{v}_{\alpha 1}$ and $\mathbf{v}_{\alpha 3} = \mathbf{B}_2^{-1} \mathbf{v}_{\alpha 1}$ enables interference alignment. For the specific example of Fig. \ref{fig:42example}, the corresponding encoding matrices are
$\mathbf{A}_1 = \left[1, 0; 0, 2 \right]$, $\mathbf{B}_1 = \left[1, 0; 0,1 \right]$.
$\mathbf{A}_2 = \left[ 2,  0; 0, 1 \right]$, $\mathbf{B}_2 = \left[ 1,  0; 0,  1 \right]$.} \label{fig:GeometricView}
\end{figure}

\subsection{Matrix Notation}
We introduce matrix notation that provides geometric interpretation of interference alignment and is useful for generalization. Let $\mathbf{a}= (a_1,a_2)^t$ and $\mathbf{b}= (b_1,b_2)^t$  be 2-dimensional information-unit vectors, where $(\cdot)^t$ indicates a transpose. Let $\mathbf{A}_i$ and $\mathbf{B}_i$ be $2$-by-$2$ encoding matrices for parity node $i$ ($i=1,2$), which contain encoding coefficients for the linear combination of the components of ``$\mathbf{a}$'' and ``$\mathbf{b}$''. For example, parity node 1 stores information in the form of $\mathbf{a}^t \mathbf{A}_1  + \mathbf{b}^t \mathbf{B}_1 $, as shown in Fig.~\ref{fig:GeometricView}. The encoding matrices for systematic nodes are not explicitly defined since those are trivially inferred. Finally we define 2-dimensional projection vectors $\mathbf{v}_{\alpha i}$'s ($i=1,2,3$).

Let us consider exact repair of systematic node 1. By connecting to three nodes, we get: $\mathbf{b}^{t} \mathbf{v}_{\alpha 1}$; $\mathbf{a}^t (\mathbf{A}_1 \mathbf{v}_{\alpha 2}) +\mathbf{b}^t (\mathbf{B}_1  \mathbf{v}_{\alpha 2})$;
$ \mathbf{a}^t ( \mathbf{A}_2 \mathbf{v}_{\alpha 3} ) + \mathbf{b}^t ( \mathbf{B}_2  \mathbf{v}_{\alpha 3} )$. Recall the goal, which is to decode 2 desired unknowns out of 3 equations including 4 unknowns. To achieve this goal, we need:
\begin{align}
{\sf rank} \left( \left[
    \begin{array}{c}
      (\mathbf{A}_1 \mathbf{v}_{\alpha 2})^t \\
      ( \mathbf{A}_2 \mathbf{v}_{\alpha 3} )^t \\
    \end{array}
  \right] \right) = 2; \;\; {\sf rank} \left( \left[
    \begin{array}{c}
      \mathbf{v}_{\alpha 1}^t \\
      (\mathbf{B}_1  \mathbf{v}_{\alpha 2})^t \\
      ( \mathbf{B}_2  \mathbf{v}_{\alpha 3} )^t \\
    \end{array}
  \right] \right) = 1.
\end{align}
The second condition can be met by setting $\mathbf{v}_{\alpha 2} = \mathbf{B}_1^{-1} \mathbf{v}_{\alpha 1}$ and $\mathbf{v}_{\alpha 3} = \mathbf{B}_2^{-1} \mathbf{v}_{\alpha 1}$. This choice forces the interference space to be collapsed into a one-dimensional linear subspace, thereby achieving interference alignment. With this setting, the first condition now becomes
\begin{align}
\label{eq_42_1}
\mathsf{rank} \left( \left[ \mathbf{A}_1 \mathbf{B}_1^{-1} \mathbf{v}_{\alpha 1} \;\;
\mathbf{A}_2  \mathbf{B}_2^{-1} \mathbf{v}_{\alpha 1} \right] \right) = 2.
\end{align}
It can be easily verified that the choice of $\mathbf{A}_i$'s and $\mathbf{B}_i$'s given in Figs. \ref{fig:42example} and \ref{fig:GeometricView} guarantees the above condition.
When the node 2 fails, we get a similar condition:
\begin{align}
\label{eq_42_2}
\mathsf{rank} \left( \left[ \mathbf{B}_1 \mathbf{A}_1^{-1} \mathbf{v}_{\beta 1} \;\;
\mathbf{B}_2  \mathbf{A}_2^{-1} \mathbf{v}_{\beta 1} \right] \right) = 2,
\end{align}
where $\mathbf{v}_{\beta i}$'s denote projection vectors for node 2 repair. This condition also holds under the given choice of encoding matrices.

\subsection{Connection with Interference Channels in Communication Problems}
Observe the three equations shown in Fig. \ref{fig:GeometricView}:
\begin{align*}
\underbrace{\left[
    \begin{array}{c}
      \mathbf{0} \\
      (\mathbf{A}_1 \mathbf{v}_{\alpha 2})^t \\
      ( \mathbf{A}_2 \mathbf{v}_{\alpha 3} )^t \\
    \end{array}
  \right] \mathbf{a}}_{desired\;signals} + \underbrace{\left[
    \begin{array}{cc}
      \mathbf{v}_{\alpha 1}^t \\
      (\mathbf{B}_1  \mathbf{v}_{\alpha 2})^t \\
      ( \mathbf{B}_2  \mathbf{v}_{\alpha 3} )^t \\
    \end{array}
  \right] \mathbf{b}}_{interference}.
\end{align*}
Separating into two parts, we can view this problem as a wireless communication problem, wherein a subset of the information is desired to be decoded in the presence of interference. Note that for each term (e.g., $\mathbf{A}_1 \mathbf{v}_{\alpha 2}$), the matrix $\mathbf{A}_1$ and vector $\mathbf{v}_{\alpha 2}$ correspond to channel matrix and transmission vector in wireless communication problems, respectively.

There are, however, significant differences. In the wireless communication problem, the channel matrices are provided \emph{by nature} and therefore not controllable. The transmission strategy alone (vector variables) can be controlled for achieving interference alignment. On the other hand, in our storage repair problems, both matrices and vectors are controllable, i.e., projection vectors and encoding matrices can be arbitrarily designed, resulting in more flexibility. However, our storage repair problem comes with unparalleled challenges due to the MDS requirement and the  multiple failure configurations. These induce multiple interference alignment constraints that need to be simultaneously satisfied.
What makes this difficult is that the encoding matrices, once designed, must be the same for \emph{all} repair configurations. This is particularly acute for large values of $k$ (even $k=3$), as the number of possible failure configurations increases with $n$ (which increases with $k$).


\section{A Proposed Framework for Exact-MSR Codes}
\label{sec-BasisFramework}


We propose a \emph{common-eigenvector} based conceptual framework to address the exact repair problem. This framework draws its inspiration from the work in \cite{KumarRamchandran_MSR} which guarantees the exact repair of systematic nodes, while satisfying the MDS code property, but which does not provide exact repair of failed parity nodes. In providing a solution for the exact repair of \emph{all} nodes, we propose here a generalized family of codes (of which the code in \cite{KumarRamchandran_MSR} is a special case). This both provides insights into the structure of codes for exact repair of all nodes, as well as opens up a much larger design space for constructive solutions. Specifically, we propose a \emph{common-eigenvector} based approach building on a certain \emph{elementary matrix} property \cite{Householder, Dubrulle} for the generalized code construction. Moreover, as in \cite{KumarRamchandran_MSR}, our proposed coding schemes are deterministic and constructive, requiring a symbol alphabet-size of at most $(2n-2k)$.

Our framework consists of four components:
(1) developing a family of codes\footnote{Interestingly, the structure of the code in \cite{KumarRamchandran_MSR} turns out to work for the exact repair of both systematic and parity nodes provided appropriate repair schemes are developed.} for exact repair of systematic codes based on the common-eigenvector concept; (2) drawing a \emph{dual} relationship between the systematic and parity node repair; (3) guaranteeing the MDS property of the code; (4) constructing codes with finite-field alphabets. 
The framework covers the case of $n \geq 2k$ (and $d \geq 2k-1$). It turns out that the $(2k,k, 2k-1)$ code case contains the key design ingredients and the case of $n \geq 2k$ can be derived from this (see Section \ref{sec-generalization}). Hence, we first focus on the simplest example: $(6,3,5)$ E-MSR codes. Later in Section \ref{sec-generalization}, we will generalize this to arbitrary $(n,k,d)$ codes in the class.


\begin{figure}[t]
\begin{center}
{\epsfig{figure=./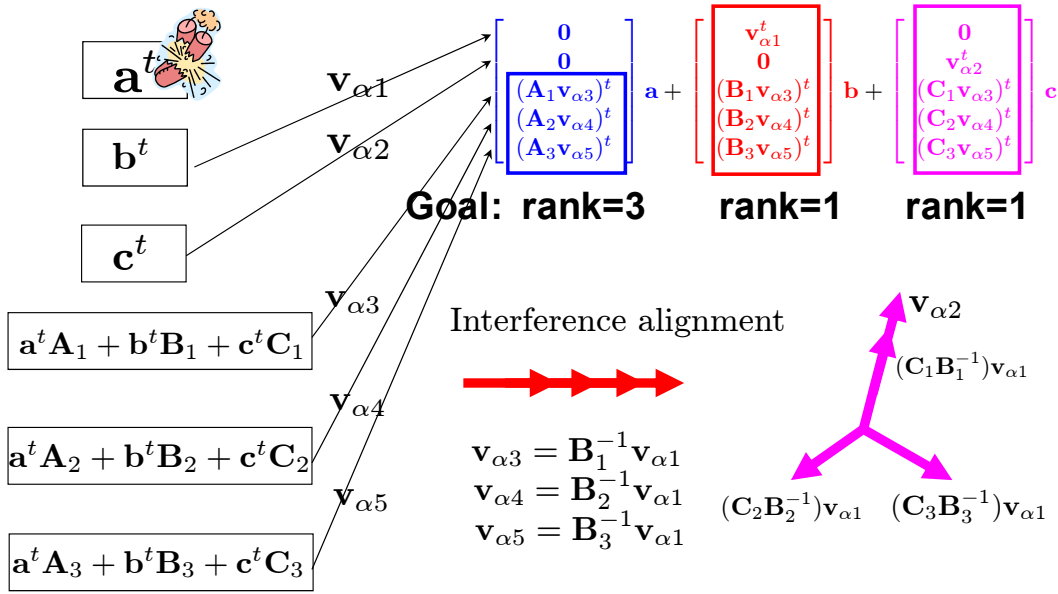, angle=0, width=0.85\textwidth}}
\end{center}
\caption{Difficulty of achieving interference alignment simultaneously.}\label{fig:63_EMSR_Challenge}
\end{figure}

\subsection{Code Structure for Systematic Node Repair}

For $k \geq 3$ (more-than-two interfering information units), achieving interference alignment for exact repair turns out to be significantly more complex than the $k=2$ case. Fig.~\ref{fig:63_EMSR_Challenge} illustrates this difficulty through the example of repairing node 1 for a $(6,3,5)$ code.
By the optimal tradeoff (\ref{eq-MSRpoint}), the choice of $\mathcal{M}=9$ gives $\alpha=3$ and $\frac{\gamma}{d}=1$. Let $\mathbf{a}=(a_1,a_2,a_3)^t$, $\mathbf{b}=(b_1,b_2,b_3)^t$ and $\mathbf{c}=(c_1,c_2,c_3)^t$. We define 3-by-3 encoding matrices of $\mathbf{A}_i$, $\mathbf{B}_i$ and  $\mathbf{C}_i$ (for $i=1,2,3$); and 3-dimensional projection vectors $\mathbf{v}_{\alpha i}$'s.

Consider the 5 $(=d)$ equations downloaded from the nodes:
\begin{align*}
  \left[
    \begin{array}{c}
      \mathbf{0} \\
      \mathbf{0}  \\
      (\mathbf{A}_1 \mathbf{v}_{\alpha 3})^t \\
       ( \mathbf{A}_2 \mathbf{v}_{\alpha 4} )^t  \\
      ( \mathbf{A}_3 \mathbf{v}_{\alpha 5} )^t
    \end{array}
  \right] \mathbf{a} +
 \left[
    \begin{array}{c}
      \mathbf{v}_{\alpha 1}^t  \\
       \mathbf{0} \\
        (\mathbf{B}_1  \mathbf{v}_{\alpha 3})^t \\
       ( \mathbf{B}_2  \mathbf{v}_{\alpha 4} )^t \\
        ( \mathbf{B}_3  \mathbf{v}_{\alpha 5} )^t \\
    \end{array}
  \right] \mathbf{b} + \left[
    \begin{array}{c}
       \mathbf{0} \\
        \mathbf{v}_{\alpha 2}^t \\
     (\mathbf{C}_1  \mathbf{v}_{\alpha 3})^t \\
        (\mathbf{C}_2  \mathbf{v}_{\alpha 4})^t \\
       (\mathbf{C}_3  \mathbf{v}_{\alpha 5})^t\\
    \end{array}
  \right] \mathbf{c}.
  \end{align*}
In order to successfully recover the desired signal components of ``$\mathbf{a}$'', the matrices associated with $\mathbf{b}$ and $\mathbf{c}$ should have rank 1, respectively, while the matrix associated with $\mathbf{a}$ should have full rank of 3. In accordance with the $(4,2)$ code example in Fig.~\ref{fig:GeometricView}, if one were to set $\mathbf{v}_{\alpha 3} = \mathbf{B}_1^{-1} \mathbf{v}_{\alpha 1}$, $\mathbf{v}_{\alpha 4} = \mathbf{B}_2^{-1} \mathbf{v}_{\alpha 2}$ and $\mathbf{v}_{\alpha 5} = \mathbf{B}_3^{-1} \mathbf{v}_{\alpha 1}$, then it is possible to achieve interference alignment with respect to $\mathbf{b}$. However, this choice also specifies the interference space of $\mathbf{c}$. If the $\mathbf{B}_i$'s and $\mathbf{C}_i$'s are not designed judiciously, interference alignment is not guaranteed for $\mathbf{c}$. Hence, it is not evident how to achieve interference alignment \emph{at the same time}.

\begin{figure}[t]
\begin{center}
{\epsfig{figure=./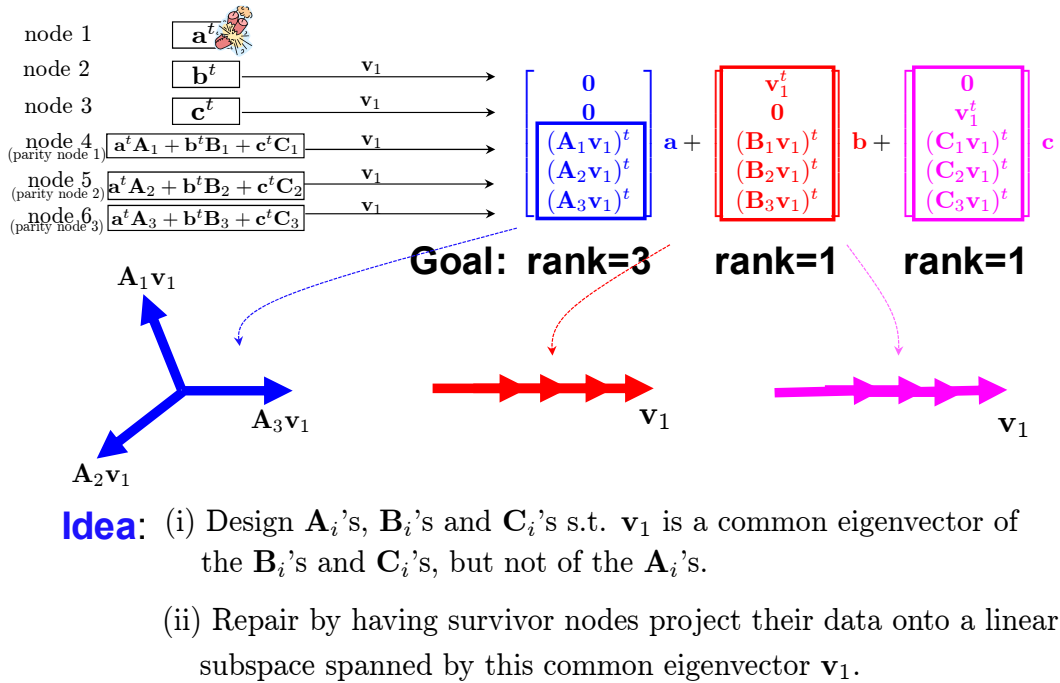, angle=0, width=0.85\textwidth}}
\end{center}
\caption{Illustration of exact repair of systematic node 1 for $(6,3,5)$ E-MSR codes. The idea consists of two parts: (i) designing $(\mathbf{A}_i,\mathbf{B}_i, \mathbf{C}_i)$'s such that $\mathbf{v}_1$ is a common eigenvector of the $\mathbf{B}_i$'s and $\mathbf{C}_i$'s, but not of $\mathbf{A}_i$'s; (ii) repairing by having survivor nodes project their data onto a linear subspace spanned by this common eigenvector $\mathbf{v}_1$.}\label{fig:63_EMSR}
\end{figure}

In order to address the challenge of simultaneous interference alignment, we invoke a \emph{common eigenvector} concept. The idea consists of two parts: (i) designing the $(\mathbf{A}_i,\mathbf{B}_i, \mathbf{C}_i)$'s such that $\mathbf{v}_1$ is a common eigenvector of the $\mathbf{B}_i$'s and $\mathbf{C}_i$'s, but not of $\mathbf{A}_i$'s\footnote{Of course, five additional constraints also need to be satisfied for the other five failure configurations for this $(6,3,5)$ code example.}; (ii) repairing by having survivor nodes \emph{project} their data onto a linear subspace spanned by this common eigenvector $\mathbf{v}_1$.
We can then achieve interference alignment for $\mathbf{b}$ and $\mathbf{c}$ \emph{at the same time}, by setting $\mathbf{v}_{\alpha i} = \mathbf{v}_1, \forall i$. As long as $[\mathbf{A}_1 \mathbf{v}_1, \mathbf{A}_2 \mathbf{v}_1, \mathbf{A}_3 \mathbf{v}_1 ]$ is invertible, we can also guarantee the decodability of $\mathbf{a}$.
See Fig.~\ref{fig:63_EMSR}.

The challenge is now to design encoding matrices to guarantee the existence of a common eigenvector while also satisfying the decodability of desired signals. The difficulty comes from the fact that in our $(6,3,5)$ code example, these constraints need to be satisfied for \emph{all} six possible failure configurations. The structure of \emph{elementary matrices} \cite{Householder, Dubrulle} (generalized matrices of Householder and Gauss matrices) gives insights into this. To see this, consider a 3-by-3 elementary matrix $\mathbf{A}$:
\begin{align}
\mathbf{A} =  \mathbf{u} \mathbf{v}^t + \alpha \mathbf{I},
\end{align}
where $\mathbf{u}$ and $\mathbf{v}$ are 3-dimensional vectors. Here is an observation that motivates our proposed structure: the dimension of the null space of $\mathbf{v}$ is 2 and the null vector $\mathbf{v}^\bot$ is an eigenvector of $\mathbf{A}$, i.e., $\mathbf{A} \mathbf{v}^{\bot} = \alpha \mathbf{v}^{\bot}$.
This motivates the following structure:
\begin{align}
\begin{split}
\label{eq:63_EncodingMatrices}
\mathbf{A}_1 &= \mathbf{u}_1 \mathbf{v}_1^t  + \alpha_1 \mathbf{I}; \; \mathbf{B}_1 = \mathbf{u}_1 \mathbf{v}_2^t  + \beta_1 \mathbf{I}; \;
\mathbf{C}_1 = \mathbf{u}_1 \mathbf{v}_3^t  + \gamma_1 \mathbf{I} \\
\mathbf{A}_2 &= \mathbf{u}_2 \mathbf{v}_1^t  + \alpha_2 \mathbf{I}; \; \mathbf{B}_2 = \mathbf{u}_2 \mathbf{v}_2^t + \beta_2 \mathbf{I}; \;
\mathbf{C}_2 = \mathbf{u}_2 \mathbf{v}_3^t+ \gamma_2 \mathbf{I} \\
\mathbf{A}_3 &= \mathbf{u}_3 \mathbf{v}_1^t  + \alpha_3 \mathbf{I}; \;
\mathbf{B}_3 = \mathbf{u}_3 \mathbf{v}_2^t + \beta_3 \mathbf{I}; \;
\mathbf{C}_3 =\mathbf{u}_3 \mathbf{v}_3^t + \gamma_3 \mathbf{I},
\end{split}
\end{align}
where $\mathbf{v}_i$'s are 3-dimensional linearly independent vectors and so are $\mathbf{u}_i$'s. The values of the $\alpha_i$'s, $\beta_i$'s and $\gamma_i$'s can be arbitrary non-zero values. First consider the simple case where the $\mathbf{v}_i$'s are \emph{orthonormal}. This is for conceptual simplicity. Later we will generalize to the case where the $\mathbf{v}_i$'s need not be orthogonal but only linearly independent: namely, \emph{bi-orthogonal} case. For the orthogonal case, we see that for $i=1,2,3,$
\begin{align}
\begin{split}
\mathbf{A}_i \mathbf{v}_1 &=  \alpha_i \mathbf{v}_1 + \mathbf{u}_i,\\
\mathbf{B}_i \mathbf{v}_1 &= \beta_i \mathbf{v}_1,\\
\mathbf{C}_i \mathbf{v}_1 &= \gamma_i \mathbf{v}_1.
\end{split}
\end{align}
Importantly, notice that $\mathbf{v}_1$ is a common eigenvector of the $\mathbf{B}_i$'s and $\mathbf{C}_i$'s, while simultaneously ensuring that the vectors of $\mathbf{A}_i \mathbf{v}_1$ are linearly independent. Hence, setting $\mathbf{v}_{\alpha i} = \mathbf{v}_1$ for all $i$, it is possible to achieve simultaneous interference alignment while also guaranteeing the decodability of the desired signals. See Fig.~\ref{fig:63_EMSR}. On the other hand, this structure also guarantees exact repair for $\mathbf{b}$ and $\mathbf{c}$. We use $\mathbf{v}_2$ for exact repair of $\mathbf{b}$. It is a common eigenvector of the $\mathbf{C}_i$'s and $\mathbf{A}_i$'s, while ensuring $[\mathbf{B}_1 \mathbf{v}_2, \mathbf{B}_2 \mathbf{v}_2, \mathbf{B}_3 \mathbf{v}_2]$ invertible. Similarly, $\mathbf{v}_3$ is used for $\mathbf{c}$.

We will see that a \emph{dual basis} property gives insights into the general \emph{bi-orthogonal} case where $\{ \mathbf{v} \}:= (\mathbf{v}_1, \mathbf{v}_2, \mathbf{v}_3)$ is not orthogonal but linearly independent. In this case, defining a dual basis $\{ \mathbf{v}' \}:= (\mathbf{v}_1',\mathbf{v}_2', \mathbf{v}_3')$ gives the solution:
\begin{align*}
\left[
  \begin{array}{c}
    \mathbf{v}_1'^t \\
    \hline
    \mathbf{v}_2'^t  \\
    \hline
    \mathbf{v}_3'^t  \\
  \end{array}
\right]: =
\left[
  \begin{array}{c|c|c}
    \mathbf{v}_1 & \mathbf{v}_2 & \mathbf{v}_3 \\
  \end{array}
\right]^{-1}.
\end{align*}
The definition gives the following property: $\mathbf{v}_i'^t \mathbf{v}_j = \delta( i-j), \forall i,j.
$ Using this property, one can see that $\mathbf{v}_1'$ is a common eigenvector of the $\mathbf{B}_i$'s and $\mathbf{C}_i$'s:
 \begin{align}
\begin{split}
\mathbf{A}_i \mathbf{v}_1' &=  \alpha_i \mathbf{v}_1' + \mathbf{u}_i,\\
\mathbf{B}_i \mathbf{v}_1' &= \beta_i \mathbf{v}_1',\\
\mathbf{C}_i \mathbf{v}_1' &= \gamma_i \mathbf{v}_1'.
\end{split}
\end{align}
So it can be used as a projection vector for exact repair of $\mathbf{a}$.
Similarly, we can use $\mathbf{v}_2'$ and $\mathbf{v}_3'$ for exact repair of $\mathbf{b}$ and $\mathbf{c}$, respectively.

\subsection{Dual Relationship between Systematic and Parity Node Repair}
We have seen so far how to ensure exact repair of the systematic nodes. We have known that if $\{ \mathbf{v} \}$ is linearly independent and so $\{ \mathbf{u} \}$ is, then using the code-structure of (\ref{eq:63_EncodingMatrices}) together with \emph{projection direction} enables repair, for arbitrary values of $( \alpha_i, \beta_i, \gamma_i)$'s. A natural question is now: will this code structure also guarantee exact repair of parity nodes? It turns out that for exact repair of all nodes, we need a special relationship between $\{ \mathbf{v} \}$ and $\{ \mathbf{u} \} $ through the correct choice of the $( \alpha_i, \beta_i, \gamma_i)$'s.

We will show that parity nodes can be repaired by drawing a \emph{dual} relationship with systematic nodes.
The procedure has two steps. The first is to remap parity nodes with
$\mathbf{a}'$, $\mathbf{b}'$, and $\mathbf{c}'$, respectively:
\begin{align*}
\left[
    \begin{array}{ccc}
      \mathbf{a}'\\
      \mathbf{b}'\\
      \mathbf{c}'\\
    \end{array}
  \right]:=
\left[
    \begin{array}{ccc}
      \mathbf{A}_1^t & \mathbf{B}_1^t & \mathbf{C}_1^t\\
      \mathbf{A}_2^t & \mathbf{B}_2^t & \mathbf{C}_2^t\\
      \mathbf{A}_3^t & \mathbf{B}_3^t & \mathbf{C}_3^t\\
    \end{array}
  \right] \left[
    \begin{array}{ccc}
      \mathbf{a}\\
      \mathbf{b}\\
      \mathbf{c}\\
    \end{array}
  \right].
\end{align*}
Systematic nodes can then be rewritten in terms of the prime notations:
\begin{align}
\begin{split}
\mathbf{a}^t &= \mathbf{a}'^t \mathbf{A}_1' + \mathbf{b}'^t \mathbf{B}_1' +
\mathbf{c}'^t \mathbf{C}_1', \\
\mathbf{b}^t &= \mathbf{a}'^t \mathbf{A}_2' + \mathbf{b}'^t \mathbf{B}_2' +
\mathbf{c}'^t \mathbf{C}_2', \\
\mathbf{c}^t &= \mathbf{a}'^t \mathbf{A}_3' + \mathbf{b}'^t \mathbf{B}_3' +
\mathbf{c}'^t \mathbf{C}_3',
\end{split}
\end{align}
where the newly mapped encoding matrices $(\mathbf{A}_i', \mathbf{B}_i', \mathbf{C}_i)$'s are defined as:
\begin{align}
\label{eq-63_mapping}
\left[
   \begin{array}{ccc}
     \mathbf{A}_1'  & \mathbf{A}_2' & \mathbf{A}_3' \\
     \mathbf{B}_1'  & \mathbf{B}_2' & \mathbf{B}_3' \\
\mathbf{C}_1' & \mathbf{C}_2' & \mathbf{C}_3' \\
   \end{array}
 \right]: = \left[
   \begin{array}{ccc}
     \mathbf{A}_1  & \mathbf{A}_2 & \mathbf{A}_3 \\
     \mathbf{B}_1  & \mathbf{B}_2 & \mathbf{B}_3 \\
\mathbf{C}_1 & \mathbf{C}_2 & \mathbf{C}_3 \\
   \end{array}
 \right]^{-1}.
\end{align}
With this remapping, one can dualize the relationship between systematic and parity node repair. Specifically, if all of the $\mathbf{A}_i'$'s, $\mathbf{B}_i'$'s, and $\mathbf{C}_i'$'s are \emph{elementary matrices} and form a similar code-structure as in (\ref{eq:63_EncodingMatrices}), exact repair of the parity nodes becomes transparent.

The challenge is now how to guarantee the dual structure. In Lemma~\ref{lemma:TensorElementaryMatrix}, we show that a special relationship between $\{ \mathbf{u} \}$ and $ \{ \mathbf{v} \}$ through ($\alpha_i$, $\beta_i$, $\gamma_i$)'s can guarantee this dual relationship of (\ref{eq-63_PrimeEM}).


\begin{figure}[t]
\begin{center}
{\epsfig{figure=./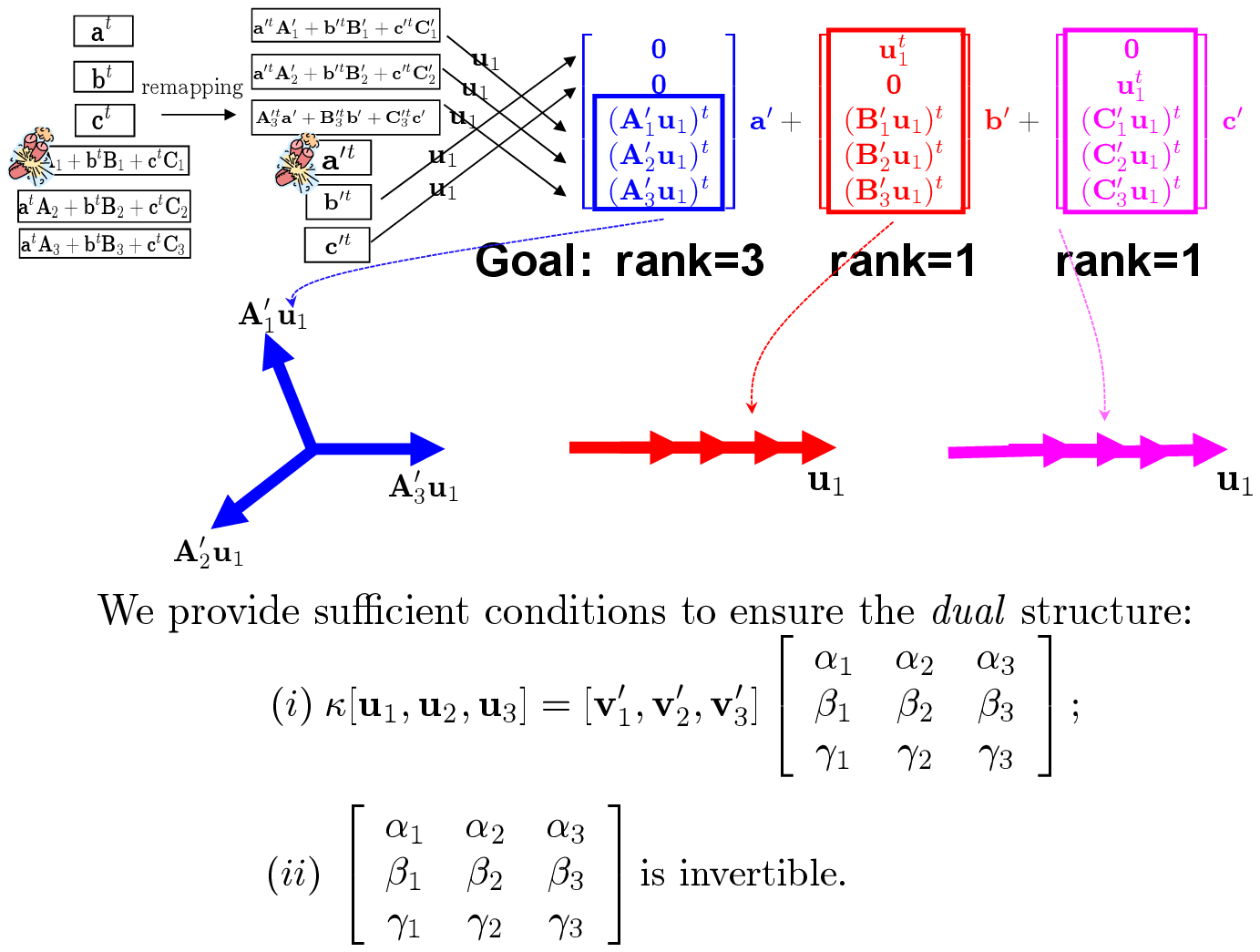, angle=0, width=0.85\textwidth}}
\end{center}
\caption{Exact repair of a parity node for $(6,3)$ E-MSR code. The idea is to construct the \emph{dual} code-structure of (\ref{eq-63_PrimeEM}) by remapping parity nodes and then adding sufficient conditions of (\ref{eq-63-MinvM}) and (\ref{eq:uiconditions}).}\label{fig:63_EMSR_parity}
\end{figure}

\begin{lemma}
\label{lemma:TensorElementaryMatrix}
Suppose
\begin{align}
\label{eq-63-MinvM}
\mathbf{M}: = \left[ \begin{array}{c|c|c}
    \alpha_{1} & \alpha_{2} & \alpha_{3}  \\
     \beta_{1} & \beta_{2} &  \beta_{3} \\
    \gamma_{1} & \gamma_{2} & \gamma_{3} \\
  \end{array} \right] \textrm{ is invertible}.
\end{align}
Also assume
\begin{align}
\label{eq:uiconditions}
\kappa \mathbf{U} = \mathbf{V}' \mathbf{M}.
\end{align}
where $\mathbf{U} = [\mathbf{u}_1, \mathbf{u}_2, \mathbf{u}_3]$,
$\mathbf{V}' = [\mathbf{v}_1', \mathbf{v}_2', \mathbf{v}_3']$, $\{ \mathbf{v}' \} := \{ \mathbf{v}_1', \mathbf{v}_2', \mathbf{v}_3' \}$ is the dual basis of $\{ \mathbf{v} \} $, i.e., $\mathbf{v}_i'^t \mathbf{v}_j = \delta(i-j)$ and $\kappa$ is an  arbitrary non-zero value s.t. $1-\kappa^2 \neq 0$.
Then, we can obtain the dual structure of (\ref{eq:63_EncodingMatrices}) as follows:
\begin{align}
\begin{split}
\label{eq-63_PrimeEM}
\mathbf{A}_1' &= \frac{1}{1-\kappa^2} \left(  \mathbf{v}_1' \mathbf{u}_1'^t  - \kappa^2 \alpha_1' \mathbf{I} \right);
\mathbf{B}_1' = \frac{1}{1-\kappa^2} \left( \mathbf{v}_1' \mathbf{u}_2'^t - \kappa^2 \alpha_2' \mathbf{I}\right) \; ;
\mathbf{C}_1' = \frac{1}{1-\kappa^2}\left(  \mathbf{v}_1' \mathbf{u}_3'^t - \kappa^2 \alpha_3' \mathbf{I} \right)\\
\mathbf{A}_2' &= \frac{1}{1-\kappa^2}\left( \mathbf{v}_2' \mathbf{u}_1'^t  - \kappa^2 \beta_1' \mathbf{I}\right); \;
\mathbf{B}_2' =  \frac{1}{1-\kappa^2}\left( \mathbf{v}_2' \mathbf{u}_2'^t - \kappa^2 \beta_2' \mathbf{I}\right); \;
\mathbf{C}_2' = \frac{1}{1-\kappa^2}\left( \mathbf{v}_2' \mathbf{u}_3'^t - \kappa^2 \beta_3' \mathbf{I} \right)\\
\mathbf{A}_3' &= \frac{1}{1-\kappa^2}\left( \mathbf{v}_3' \mathbf{u}_1'^t - \kappa^2 \gamma_1' \mathbf{I}\right); \;
\mathbf{B}_3' =  \frac{1}{1-\kappa^2}\left( \mathbf{v}_3' \mathbf{u}_2'^t - \kappa^2 \gamma_2' \mathbf{I}\right);\;
\mathbf{C}_3' =\frac{1}{1-\kappa^2} \left( \mathbf{v}_3' \mathbf{u}_3'^t - \kappa^2 \gamma_3' \mathbf{I}\right),
\end{split}
\end{align}
where  $\{ \mathbf{u}' \}$ is the dual basis of $\{ \mathbf{u} \}$, i.e., $\mathbf{u}_i'^t \mathbf{u}_j = \delta(i-j)$ and $(\alpha_i', \beta_i', \gamma_i')$'s are the dual basis vectors, i.e.,
$<(\alpha_i', \beta_i', \gamma_i'),(\alpha_j, \beta_j, \gamma_j) > = \delta(i-j)$:
\begin{align}
\left[
  \begin{array}{ccc}
    \alpha_{1}' & \beta_{1}' & \gamma_{1}'  \\
\hline
     \alpha_{2}' &  \beta_{2}' &  \gamma_{2}' \\
\hline
    \alpha_{3}' & \beta_{3}'   & \gamma_{3}' \\
  \end{array}
\right]: = \left[ \begin{array}{c|c|c}
    \alpha_{1} & \alpha_{2} & \alpha_{3}  \\
     \beta_{1} & \beta_{2} &  \beta_{3} \\
    \gamma_{1} & \gamma_{2} & \gamma_{3} \\
  \end{array} \right]^{-1}.
\end{align}

\end{lemma}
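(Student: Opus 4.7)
The plan is a direct verification that the block matrix formed from the nine primed matrices in (\ref{eq-63_PrimeEM}) is indeed the inverse of $\mathbf{G}$, the block matrix whose columns are $(\mathbf{A}_i,\mathbf{B}_i,\mathbf{C}_i)^t$. First I would collapse the nine individual formulas into a uniform expression: with $m_{ij}$ and $m'_{ij}$ denoting the $(i,j)$-entries of $\mathbf{M}$ and $\mathbf{M}^{-1}$ respectively, the $(i,j)$-block of $\mathbf{G}$ is $\mathbf{G}_{ij} = \mathbf{u}_j\mathbf{v}_i^t + m_{ij}\mathbf{I}$, and the $(i,j)$-block of the claimed inverse can be written uniformly as $\frac{1}{1-\kappa^2}\bigl(\mathbf{v}_j'\mathbf{u}_i'^t - \kappa^2 m'_{ij}\mathbf{I}\bigr)$. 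A quick sanity check confirms this matches all nine cases in (\ref{eq-63_PrimeEM}) under the identification of $(\alpha_i',\beta_i',\gamma_i')$ with the $i$-th row of $\mathbf{M}^{-1}$ given at the end of the lemma.

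Next I would multiply out the $(i,j)$-block of $\mathbf{G}\cdot(\text{claimed inverse})$, which expands into four sums indexed by $k$. Two of them are ``pure'' terms that simplify immediately: the quartic term $\sum_k \mathbf{u}_k\mathbf{v}_i^t\mathbf{v}_j'\mathbf{u}_k'^t$ collapses to $\delta_{ij}\mathbf{I}$ using the dual-basis identities $\mathbf{v}_i^t\mathbf{v}_j' = \delta_{ij}$ and $\sum_k \mathbf{u}_k\mathbf{u}_k'^t = \mathbf{U}\mathbf{U}'^t = \mathbf{I}$, while the scalar term $-\kappa^2\sum_k m_{ik}m'_{kj}\mathbf{I}$ equals $-\kappa^2\delta_{ij}\mathbf{I}$ via $\mathbf{M}\mathbf{M}^{-1} = \mathbf{I}$ (which implicitly uses the assumed invertibility of $\mathbf{M}$).

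The crucial step is showing that the two remaining cross terms, namely $-\kappa^2\sum_k m'_{kj}\mathbf{u}_k\mathbf{v}_i^t$ and $+\sum_k m_{ik}\mathbf{v}_j'\mathbf{u}_k'^t$, cancel. This is precisely where the hypothesis $\kappa\mathbf{U} = \mathbf{V}'\mathbf{M}$ enters. Read columnwise it says $\kappa\mathbf{u}_k = \sum_l m_{lk}\mathbf{v}_l'$; taking inverses/transposes of this matrix identity—using $\mathbf{V}'^{-1} = \mathbf{V}^t$ and $\mathbf{U}'^{-1} = \mathbf{U}^t$, which encode the two dual-basis relations—yields the dual form $\mathbf{u}_k' = \kappa\sum_l m'_{kl}\mathbf{v}_l$. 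Substituting each of these into the appropriate cross sum and telescoping the inner summation through $\mathbf{M}\mathbf{M}^{-1} = \mathbf{I}$, the first cross term reduces to $-\kappa\,\mathbf{v}_j'\mathbf{v}_i^t$ and the second to $+\kappa\,\mathbf{v}_j'\mathbf{v}_i^t$, and they cancel. What survives is $(1-\kappa^2)\delta_{ij}\mathbf{I}$, and the prefactor $\frac{1}{1-\kappa^2}$—well-defined by the hypothesis $1-\kappa^2 \neq 0$—delivers $\delta_{ij}\mathbf{I}$, as required.

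The main obstacle is really organizational rather than conceptual: keeping straight the three dualities in play—the dual basis $\{\mathbf{v}_i'\}$ of $\{\mathbf{v}_i\}$, the dual basis $\{\mathbf{u}_i'\}$ of $\{\mathbf{u}_i\}$, and the inverse matrix $\mathbf{M}^{-1}$ encoding the dual of the scalar triples $(\alpha_i,\beta_i,\gamma_i)$—and recognizing that $\kappa\mathbf{U} = \mathbf{V}'\mathbf{M}$ is exactly the relation that forces the two cross terms to have equal magnitude with opposite signs. Once the uniform block-$(i,j)$ notation is set up and the hypothesis is rewritten columnwise, the remaining algebra is routine. One should also verify the analogous product $(\text{claimed inverse})\cdot\mathbf{G} = \mathbf{I}$, but by a completely symmetric computation (swap the roles of $\{\mathbf{u},\mathbf{u}'\}$ with $\{\mathbf{v}',\mathbf{v}\}$ and of $\mathbf{M}$ with $\mathbf{M}^{-1}$) the same cancellation carries through.
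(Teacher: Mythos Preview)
Your proposal is correct and follows essentially the same direct-verification approach as the paper: both expand the block product and use the three dualities together with the relation $\kappa\mathbf{U}=\mathbf{V}'\mathbf{M}$ (and its consequence $\mathbf{u}_k'=\kappa\sum_l m'_{kl}\mathbf{v}_l$, which is the paper's Claim~1) to make the cross terms cancel. The only cosmetic differences are that the paper verifies $\mathbf{G}'\mathbf{G}=\mathbf{I}$ on representative blocks while you verify $\mathbf{G}\mathbf{G}'=\mathbf{I}$ in uniform $(i,j)$-index notation.
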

\begin{proof}
See Appendix~\ref{appendix:lemma_TensorEM}.
\end{proof}

\begin{remarks}
The dual structure of (\ref{eq-63_PrimeEM}) now gives exact-repair solutions for parity nodes. For exact repair of parity node 1, we can use vector $\mathbf{u}_1$ (a common eigenvector of the $\mathbf{B}_i'$'s and $\mathbf{C}_i'$'s), since it enables simultaneous interference alignment for $\mathbf{b}'$ and $\mathbf{c}'$, while ensuring the decodability of $\mathbf{a}'$. See Fig.~\ref{fig:63_EMSR_parity}. Notice that more conditions of (\ref{eq-63-MinvM}) and (\ref{eq:uiconditions}) are added to ensure exact repair of all nodes, while these conditions were unnecessary for exact repair of systematic nodes only. Also note these are only sufficient conditions.
\end{remarks}

\begin{remarks}
Note that the dual structure of (\ref{eq-63_PrimeEM}) is quite similar to the primary structure of (\ref{eq:63_EncodingMatrices}). The only difference is that in the dual structure, $\{ \mathbf{u} \} $ and $\{ \mathbf{v} \}$ are interchanged to form a \emph{transpose-like} structure. This reveals insights into how to guarantee exact repair of parity nodes in a transparent manner.
\end{remarks}


\subsection{The MDS-Code Property}
\label{sec-MDScodeProperty}

The third part of the framework is to guarantee the MDS-code property, which allows us to identify specific constraints on the $(\alpha_i, \beta_i, \gamma_i)$'s and/or ($\{ \mathbf{v} \}$, $\{ \mathbf{u} \}$). Consider four cases, associated in the Data Collector (DC) who is intended in the source file data: (1) 3 systematic nodes; (2) 3 parity nodes; (3) 1 systematic and 2 parity nodes; (4) 1 systematic and 2 parity nodes.

The first is a trivial case. The second case has been already verified in the process of forming the dual code-structure of (\ref{eq-63_PrimeEM}). The invertibility condition of (\ref{eq-63-MinvM}) together with (\ref{eq:uiconditions}) suffices to ensure the invertibility of the composite matrix. The third case requires the invertibility of all of each encoding matrix. In this case, it is necessary that the $\alpha_i$'s, $\beta_i$'s and $\gamma_i$'s are non-zero values; otherwise, each encoding matrix has rank 1. Also the non-zero values together with (\ref{eq:uiconditions}) guarantee the invertibility of each encoding matrix. Under these conditions, for example, the inverse of $\mathbf{A}_1$ is well defined as:
\begin{align*}
\mathbf{A}_1^{-1} &= \frac{1}{\alpha_1} \left( - \frac{1}{ \alpha_1 + \mathbf{v}_1^t \mathbf{u}_1} \mathbf{u}_1 \mathbf{v}_1^t + \mathbf{I} \right) \\
& =\frac{1}{\alpha_1} \left( - \frac{\kappa}{ \alpha_1 (\kappa  + 1)} \mathbf{u}_1 \mathbf{v}_1^t + \mathbf{I} \right).
\end{align*}
where the second equality follows from $\mathbf{v}_1^t \mathbf{u}_1 = \frac{\alpha_1}{\kappa}$ due to (\ref{eq:uiconditions}).

The last case requires some non-trivial work. Consider a specific example where the DC connects to nodes (3,4,5). In this case, we first recover $\mathbf{c}$ from node 3 and  subtract the terms associated with $\mathbf{c}$ from nodes 4 and 5. We then get:
\begin{align}
\begin{split}
\left[
  \begin{array}{cc}
    \mathbf{a}^t &  \mathbf{b}^t\\
  \end{array}
\right]
\left[
  \begin{array}{cc}
    \mathbf{A}_1 & \mathbf{A}_2 \\
    \mathbf{B}_1 & \mathbf{B}_2 \\
  \end{array}
\right] = \left[
  \begin{array}{cc}
    \mathbf{a}^t &  \mathbf{b}^t\\
  \end{array}
\right]
\left[
  \begin{array}{cc}
    \mathbf{u}_1 \mathbf{v}_1^t  + \alpha_1 \mathbf{I} & \mathbf{u}_2 \mathbf{v}_1^t  + \alpha_2 \mathbf{I} \\
    \mathbf{u}_1 \mathbf{v}_2^t  + \beta_1 \mathbf{I} & \mathbf{u}_2 \mathbf{v}_2^t + \beta_2 \mathbf{I} \\
  \end{array}
\right].
\end{split}
\end{align}
Using a Gaussian elimination method, we show that the sub-composite matrix is invertible if
\begin{align}
\mathbf{M}_2:= \left[ \begin{array}{c|c}
    \alpha_{1} & \alpha_{2} \\
     \beta_{1} & \beta_{2} \\
  \end{array} \right] \textrm{ is invertible}.
\end{align}
Here is the Gaussian elimination method:
\begin{align*}
\left[
  \begin{array}{cc}
    \mathbf{u}_1 \mathbf{v}_1^t  + \alpha_1 \mathbf{I} & \mathbf{u}_2 \mathbf{v}_1^t  + \alpha_2 \mathbf{I} \\
    \mathbf{u}_1 \mathbf{v}_2^t  + \beta_1 \mathbf{I} & \mathbf{u}_2 \mathbf{v}_2^t + \beta_2 \mathbf{I} \\
  \end{array}
\right] \overset{(a)} \sim \left[
                             \begin{array}{cc}
                               \mathbf{v}_1^t + \alpha_1 \mathbf{u}_1'^t & \alpha_2  \mathbf{u}_1'^t \\
                               \mathbf{v}_2^t + \beta_1 \mathbf{u}_1'^t & \beta_2  \mathbf{u}_1'^t \\
                               \alpha_1 \mathbf{u}_2'^t & \mathbf{v}_1^t + \alpha_2 \mathbf{u}_2'^t \\
                               \beta_1 \mathbf{u}_2'^t & \mathbf{v}_2^t + \beta_2 \mathbf{u}_2'^t \\
                               \alpha_1 \mathbf{u}_3'^t & \alpha_2 \mathbf{u}_3'^t \\
                               \beta_1 \mathbf{u}_3'^t & \beta_2 \mathbf{u}_3'^t \\
                             \end{array}
                           \right] \overset{(b)} \sim \left[
                             \begin{array}{cc}
                               \alpha_1' \mathbf{v}_1^t + \beta_1' \mathbf{v}_2^t + \mathbf{u}_1'^t & \mathbf{0}^t \\
                               \alpha_2' \mathbf{v}_1^t + \beta_2' \mathbf{v}_2^t &  \mathbf{u}_1'^t \\
                                \mathbf{u}_2'^t & \alpha_1' \mathbf{v}_1^t + \beta_1' \mathbf{v}_2^t \\
                               \mathbf{0}^t & \alpha_2' \mathbf{v}_1^t + \beta_2' \mathbf{v}_2^t + \mathbf{u}_2'^t \\
                                \mathbf{u}_3'^t &  \mathbf{0}^t \\
                                \mathbf{0}^t & \mathbf{u}_3'^t \\
                             \end{array}
                           \right].
                           \end{align*}
where $(a)$ following from multiplying $[\mathbf{u}_1'^t, \mathbf{0}^t;\mathbf{0}^t, \mathbf{u}_1'^t;\mathbf{u}_2'^t, \mathbf{0}^t; \mathbf{0}^t,\mathbf{u}_2'^t;
\mathbf{u}_3'^t, \mathbf{0}^t; \mathbf{0}^t,\mathbf{u}_3'^t  ]$ to the left; $(b)$ follows from multiplying $[\mathbf{M}_2^{-1}, \mathbf{0}, \mathbf{0}; \mathbf{0}, \mathbf{M}_2^{-1}, \mathbf{0};\mathbf{0}, \mathbf{0},\mathbf{M}_2^{-1} ]$ to the left. Here $(\alpha_i', \beta_i')$'s are the dual basis vectors of $(\alpha_i, \beta_i)$'s. Note that the resulting matrix is invertible, since $\{\mathbf{u}'\}$ is a dual basis.

Considering the above 4 cases, the following condition together with (\ref{eq-63-MinvM}) and (\ref{eq:uiconditions}) suffices for guaranteeing the MDS-code property:
\begin{align}
\label{eq:conditionforMDS}
\textit{Any submatrix of } \mathbf{M} \textit{ of (\ref{eq-63-MinvM}) is invertible.}
\end{align}

\subsection{Code Construction with Finite-Field Alphabets}
The last part is to design $\mathbf{M}$ of (\ref{eq-63-MinvM}) and $\{\mathbf{v} \}:=(\mathbf{v}_1, \mathbf{v}_2, \mathbf{v}_3)$ in (\ref{eq:63_EncodingMatrices}) such that $\{\mathbf{v} \}$ is linearly independent and the conditions of (\ref{eq:uiconditions}) and (\ref{eq:conditionforMDS}) are satisfied. First, in order to guarantee (\ref{eq:conditionforMDS}), we can use a Cauchy matrix, as it was used for the code introduced in \cite{KumarRamchandran_MSR}.

\begin{definition}[A Cauchy Matrix \cite{Bernstein:Cauchy}]
A Cauchy matrix $\mathbf{M}$ is an $m \times n$ matrix with entries $m_{ij}$ in the form:
\begin{align*}
m_{ij} = \frac{1}{x_i - y_j}, \forall i=1, \cdots, m, j=1, \cdots n, x_i \neq y_j,
\end{align*}
where $x_i$ and $y_j$ are elements of a field and $\{x_i \}$ and $\{ y_j \}$ are injective sequences, i.e., elements of the sequence are distinct.
\end{definition}

The injective property of $\{x_i \}$ and $\{ y_j \}$ requires a finite field size of $2s$ for an $s \times s$ Cauchy matrix. Therefore, in our $(6,3,5)$ code example, the finite field size of 6 suffices. The field size condition for guaranteeing linear independence of $\{ \mathbf{v} \}$ is more relaxed.

\subsection{Summary}
Using the code structure of (\ref{eq:63_EncodingMatrices}) and the conditions of (\ref{eq-63-MinvM}), (\ref{eq:uiconditions}) and (\ref{eq:conditionforMDS}), we can now state the following theorem.

\begin{theorem}[$(6,3,5)$ E-MSR Codes]
\label{theorem-63}
Suppose $\mathbf{M}$ of (\ref{eq-63-MinvM}) is a Cauchy matrix, i.e., every submatrix of is invertible.
Each element of $\mathbf{M}$ is in
${\sf GF} (q)$ and $q \geq 6$. Suppose encoding matrices form the code structure of (\ref{eq:63_EncodingMatrices}), $\{ \mathbf{v} \}:=(\mathbf{v}_1, \mathbf{v}_2, \mathbf{v}_3)$ is linearly independent, and $\{ \mathbf{u} \}$ satisfies the condition of (\ref{eq:uiconditions}).  Then, the code satisfies the MDS property and achieves the MSR point under exact repair constraints of \emph{all} nodes.
\end{theorem}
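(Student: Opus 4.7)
The plan is to verify three things separately: (i) exact repair of the three systematic nodes, (ii) exact repair of the three parity nodes via the dual code-structure of Lemma~\ref{lemma:TensorElementaryMatrix}, and (iii) the MDS property for every three-node subset. The first two items reuse the same common-eigenvector template; the third reduces to checking invertibility of certain submatrices of $\mathbf{M}$ (hence the Cauchy hypothesis).

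For (i), I would fix the projection vectors at all $d=5$ surviving nodes equal to the first dual-basis vector $\mathbf{v}_1'$ when repairing node $\mathbf{a}$. The code-structure (\ref{eq:63_EncodingMatrices}) together with $\mathbf{v}_i'^t\mathbf{v}_j=\delta(i-j)$ immediately gives $\mathbf{B}_i\mathbf{v}_1'=\beta_i\mathbf{v}_1'$ and $\mathbf{C}_i\mathbf{v}_1'=\gamma_i\mathbf{v}_1'$, collapsing both interference subspaces into $\mathrm{span}(\mathbf{v}_1')$. Decodability of $\mathbf{a}$ then reduces to invertibility of $[\mathbf{A}_1\mathbf{v}_1',\mathbf{A}_2\mathbf{v}_1',\mathbf{A}_3\mathbf{v}_1']=\mathbf{v}_1'[\alpha_1,\alpha_2,\alpha_3]+[\mathbf{u}_1,\mathbf{u}_2,\mathbf{u}_3]$; since condition (\ref{eq:uiconditions}) reads $\kappa\mathbf{U}=\mathbf{V}'\mathbf{M}$ with $\mathbf{V}'$ and $\mathbf{M}$ both invertible, $\mathbf{U}$ is invertible and the rank claim is routine. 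Repair of $\mathbf{b}$ and $\mathbf{c}$ is the symmetric argument with $\mathbf{v}_2'$ and $\mathbf{v}_3'$, respectively.

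For (ii), I would remap the parity nodes to $(\mathbf{a}',\mathbf{b}',\mathbf{c}')$ as in (\ref{eq-63_mapping}) and invoke Lemma~\ref{lemma:TensorElementaryMatrix}. The hypotheses of that lemma are exactly what the theorem supplies: $\mathbf{M}$ invertible (immediate from the Cauchy property) and (\ref{eq:uiconditions}) with a scalar $\kappa$ satisfying $1-\kappa^2\ne 0$, which is an admissible choice in ${\sf GF}(q)$ for $q\ge 6$. The lemma produces the dual elementary-matrix form (\ref{eq-63_PrimeEM}), which is precisely the primary structure (\ref{eq:63_EncodingMatrices}) with the roles of $\{\mathbf{u}\}$ and $\{\mathbf{v}\}$ swapped. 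I then replay the argument of (i), now using $\mathbf{u}_1$ (resp.\ $\mathbf{u}_2,\mathbf{u}_3$) as the common-eigenvector projection direction, to repair parity node 1 (resp.\ 2, 3) exactly.

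For (iii), I would split the $\binom{6}{3}$ choices by the number of systematic nodes the DC accesses. The $(3,0)$ case is trivial. The $(0,3)$ case is exactly the inversion already carried out in (\ref{eq-63_mapping}), whose validity is guaranteed by the invertibility of $\mathbf{M}$ together with (\ref{eq:uiconditions}). The $(2,1)$ case requires one single encoding matrix to be invertible; since every Cauchy entry $\alpha_i,\beta_i,\gamma_i$ is nonzero and $\mathbf{v}_j^t\mathbf{u}_i$ equals a scaled dual-basis inner product by (\ref{eq:uiconditions}), the Sherman--Morrison formula $(\mathbf{u}_i\mathbf{v}_j^t+\alpha\mathbf{I})^{-1}$ is well-defined (as already exhibited in the text for $\mathbf{A}_1^{-1}$). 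The $(1,2)$ case is the one requiring real work: following the Gaussian-elimination reduction shown in Section~\ref{sec-MDScodeProperty}, one multiplies on the left by the block containing the $\mathbf{u}_i'^t$ rows and then by $\mathrm{blkdiag}(\mathbf{M}_2^{-1},\mathbf{M}_2^{-1},\mathbf{M}_2^{-1})$, where $\mathbf{M}_2$ is the relevant $2\times 2$ submatrix of $\mathbf{M}$; the surviving matrix is triangular in the dual basis $\{\mathbf{u}'\}$, hence invertible. The Cauchy property of $\mathbf{M}$ says every such $2\times 2$ submatrix is invertible, closing the case.

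The main obstacle will be the $(1,2)$ subcase of (iii): keeping track of the block-row operations and verifying that after elimination the residual matrix has entries spanning the full dual basis $\{\mathbf{u}'\}$ requires a careful bookkeeping of the $\alpha_i',\beta_i',\gamma_i'$ substitutions and their interplay with condition (\ref{eq:uiconditions}). All the field-size accounting then follows: six distinct field elements suffice to realize the $3\times 3$ Cauchy matrix, and the additional constraint $\kappa\ne \pm 1$ can be absorbed for any $q\ge 6$.
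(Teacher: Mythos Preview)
Your plan is correct and mirrors the paper's own argument essentially step for step: Section~\ref{sec-BasisFramework} establishes (i) via the dual-basis projection $\mathbf{v}_i'$, (ii) via the remapping and Lemma~\ref{lemma:TensorElementaryMatrix}, and (iii) via the same four-way case split with the Gaussian elimination you describe for the $(1,2)$ subcase. The only detail you leave implicit that the paper also leaves implicit is that the decodability matrix $\mathbf{U}+\mathbf{v}_1'[\alpha_1,\alpha_2,\alpha_3]$ is invertible precisely because $1+\kappa\neq 0$, which is already part of the hypothesis $1-\kappa^2\neq 0$ in (\ref{eq:uiconditions}).
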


\begin{remarks}
Note that the code introduced in \cite{KumarRamchandran_MSR} is a special case of Theorem~\ref{theorem-63}, where $\mathbf{V} = [\mathbf{v}_1, \mathbf{v}_2, \mathbf{v}_3 ] = \mathbf{I}$. 
\end{remarks}

\section{Examples}
\label{section-(63)}

We provide two numerical examples: (1) an orthogonal code example where $\mathbf{V}=[\mathbf{v}_1, \mathbf{v}_2, \mathbf{v}_3]$ is orthogonal, e.g., $\mathbf{V}=\mathbf{I}$; (2) an bi-orthogonal code example where  $\mathbf{V}$ is not orthogonal but invertible. As mentioned earlier, the code in \cite{KumarRamchandran_MSR} belongs to the case of $\mathbf{V}=\mathbf{I}$.

We will also discuss the complexity of repair construction schemes for each of these examples. It turns out that the first code has significantly lower complexity for exact repair of systematic nodes, as compared to that of parity nodes. On the other hand, for the second bi-orthogonal codes, the specific choice of $\mathbf{V}=\kappa^{-1} \mathbf{M}^t$ gives $\mathbf{U}= \mathbf{I}$, thereby providing much simpler parity-node repair schemes instead. Depending on applications of interest, one can choose an appropriate code among our generalized family of codes.

\subsection{Orthogonal Case}

We present an example of $(6,3,5)$ E-MSR codes defined over ${\sf GF}(4)$ where $\mathbf{V}=\mathbf{I}$ and
\begin{align*}
\mathbf{M} = \left[
               \begin{array}{ccc}
                 1 & 1 & 1 \\
                 1 & 2 & 3 \\
                 1 & 3 & 2 \\
               \end{array}
             \right], \mathbf{U} = \kappa^{-1} \mathbf{V}' \mathbf{M} = 2 \left[
               \begin{array}{ccc}
                 1 & 1 & 1 \\
                 1 & 2 & 3 \\
                 1 & 3 & 2 \\
               \end{array}
             \right] = \left[
               \begin{array}{ccc}
                 2 & 2 & 2 \\
                 2 & 3 & 1 \\
                 2 & 1 & 3 \\
               \end{array}
             \right],
\end{align*}
where $\mathbf{U}$ is set based on (\ref{eq:uiconditions}) and $\kappa=2^{-1}$. We use a generator polynomial of $g(x)=x^2+x+1$. Notice that we employ a \emph{non-Cauchy}-type matrix to construct a field-size 4 code (smaller than 6 required when using a Cauchy matrix). Remember that a Cauchy matrix provides only a sufficient condition for ensuring the invertibility of any submatrices of $\mathbf{M}$. By (\ref{eq:63_EncodingMatrices}) and (\ref{eq-63_PrimeEM}), the primary and dual code structures are given by

\begin{align}
\label{eq:63GGp_Orthogonal}
 \mathbf{G}= \left[
  \begin{array}{ccc|ccc|ccc}
    3 & 0   &0  &3  &0  &0  &3  &0  &0\\
    2 & 1   &0  &3  &1  &0  &1  &1  &0\\
    2 & 0   &1  &1  &0  &1  &3  &0  &1\\
\hline
    1 & 2   &0  &2  &2  &0  &3  &2  &0\\
    0 & 3   &0  &0  &1  &0  &0  &2  &0\\
    0 & 2   &1  &0  &1  &2  &0  &3  &3\\
\hline
    1 & 0   &2  &3  &0  &2  &2  &0  &2\\
    0 & 1   &2  &0  &3  &3  &0  &2  &1\\
    0 & 0   &3  &0  &0  &2  &0  &0  &1\\
  \end{array}
\right]; \mathbf{G}^{-1}= \left[
  \begin{array}{ccc|ccc|ccc}
    2 & 1   &1  &3  &0  &0  &3  &0  &0\\
    0 & 3   &0  &1  &2  &1  &0  &3  &0\\
    0 & 0   &3  &0  &0  &3  &1  &1  &2\\
\hline
    2 & 3   &2  &2  &0  &0  &1  &0  &0\\
    0 & 3   &0  &1  &1  &2  &0  &1  &0\\
    0 & 0   &3  &0  &0  &2  &1  &3  &3\\
\hline
    2 & 2   &3  &1  &0  &0  &2  &0  &0\\
    0 & 3   &0  &1  &3  &3  &0  &2  &0\\
    0 & 0   &3  &0  &0  &1  &1  &2  &1\\
  \end{array}
\right].
\end{align}
where
\begin{align*}
\mathbf{G}:= \left[
   \begin{array}{ccc}
     \mathbf{A}_1 & \mathbf{A}_2 & \mathbf{A}_3 \\
      \mathbf{B}_1 & \mathbf{B}_2 & \mathbf{B}_3\\
      \mathbf{C}_1 & \mathbf{C}_2 & \mathbf{C}_3 \\
   \end{array}
 \right]; \mathbf{G}^{-1} =\left[
   \begin{array}{ccc}
     \mathbf{A}_1' & \mathbf{A}_2' & \mathbf{A}_3' \\
      \mathbf{B}_1' & \mathbf{B}_2' & \mathbf{B}_3'\\
      \mathbf{C}_1' & \mathbf{C}_2' & \mathbf{C}_3' \\
   \end{array}
 \right].
\end{align*}

\begin{figure}[t]
\begin{center}
{\epsfig{figure=./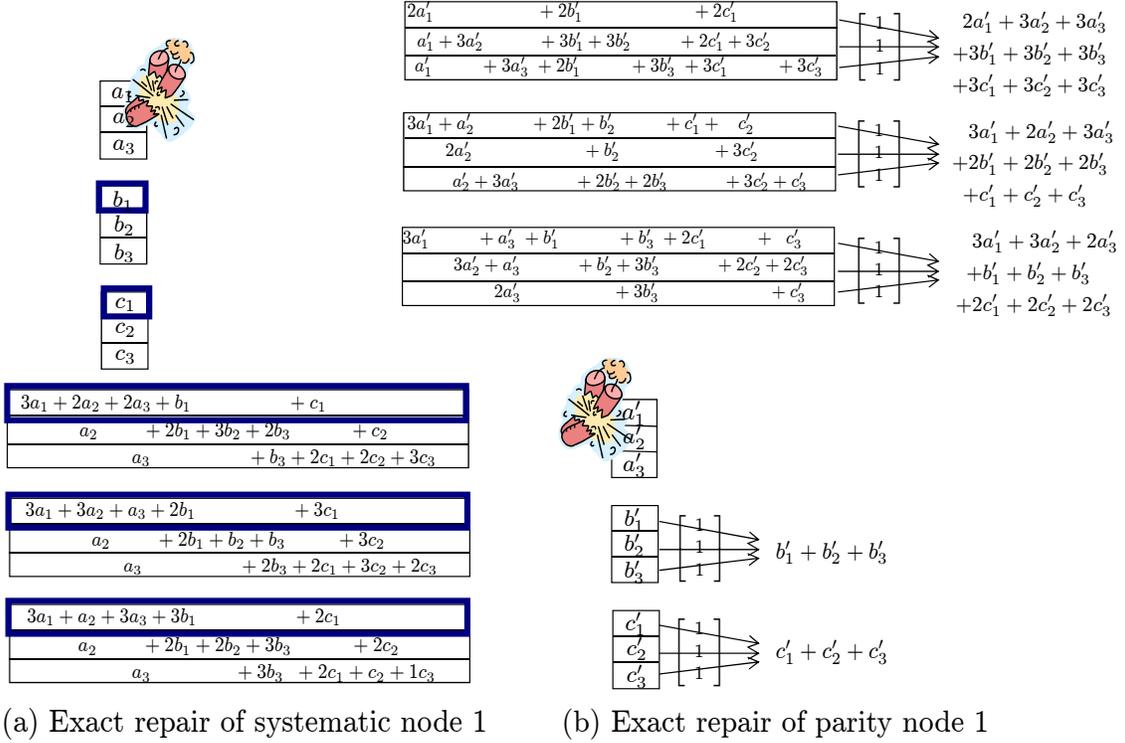, angle=0, width=0.9\textwidth}}
\end{center}
\caption{Orthogonal case: Illustration of exact pair for a $(6,3,5)$ E-MSR code defined over ${\sf GF}(4)$ where a generator polynomial $g(x)= x^2 + x +1$. The projection vector solution for systematic node repair is quite simple: $\mathbf{v}_{\alpha i}= \mathbf{v}_1=(1,0,0)^t, \forall i$. 
We download only the first equation from each survivor node; For parity node repair, our new framework provides a simple scheme: setting all of the projection vectors as $ 2^{-1} \mathbf{u}_1 = (1,1,1)^t$. This enables simultaneous interference alignment, while guaranteeing the decodability of $\mathbf{a}$.} \label{fig:63_EMSR_Orthogonal}
\end{figure}

Fig.~\ref{fig:63_EMSR_Orthogonal} shows an example for exact repair of $(a)$ systematic node 1 and $(b)$ parity node 1. Note that the projection vector solution for systematic node repair is quite simple: $\mathbf{v}_{\alpha i}= \mathbf{v}_1=(1,0,0)^t, \forall i$. We download only the first equation from each survivor node. Notice that the downloaded five equations contain only five unknown variables of $(a_1,a_2,a_3, b_1,c_1)$ and three equations associated with $\mathbf{a}$ are linearly independent. Hence, we can successfully recover $\mathbf{a}$.

On the other hand, exact repair of parity nodes seems non-straightforward. However, our framework provides quite a simple repair scheme: setting all of the projection vectors as $2^{-1} \mathbf{u}_1 = (1,1,1)^t$. This enables simultaneous interference alignment, while guaranteeing the decodability of $\mathbf{a}$. Notice that $(b_1',b_2',b_3')$ and  $(c_1',c_2',c_3')$ are aligned into $b_1'+b_2'+b_3'$ and $c_1'+c_2'+c_3'$, respectively, while three equations associated with $\mathbf{a}'$ are linearly independent.

As one can see, the complexity of systematic node repair is a little bit lower than that of parity node repair, although both repair schemes are simple. Hence, one can expect that this orthogonal code is useful for the applications where the complexity of systematic node repair needs to be significantly low.

\subsection{Bi-Orthogonal Case}

We provide another example of $(6,3,5)$ E-MSR codes where $\mathbf{V}$ is not orthogonal but invertible. We use the same field size of 4, the same generator polynomial and the same $\mathbf{M}$. Instead we choose non-orthogonal $\mathbf{V}$ so that the complexity of parity node repair can be significantly low. Our framework provides a concrete guideline for designing this type of code. Remember that the projection vector solutions are $\mathbf{u}_1$, $\mathbf{u}_2$ and $\mathbf{u}_3$ for exact repair of each parity node, respectively. For low complexity, we can first set $\mathbf{U} = \mathbf{I}$. The condition (\ref{eq:uiconditions}) then gives the following choice:
\begin{align*}
\mathbf{V} = \mathbf{M}^t \kappa^{-1} = \left[
               \begin{array}{ccc}
                 2 & 2 & 2 \\
                 2 & 3 & 1 \\
                 2 & 1 & 3 \\
               \end{array}
             \right],
\end{align*}
where we use $\kappa=2^{-1}$. By (\ref{eq:63_EncodingMatrices}) and (\ref{eq-63_PrimeEM}), the primary and dual code structures are given by

\begin{align}
\label{eq:63GGp_BiOrthogonal}
 \mathbf{G}= \left[
  \begin{array}{ccc|ccc|ccc}
    3 & 2   &2  &1  &0  &0  &1  &0  &0\\
    0 & 1   &0  &2  &3  &2  &0  &1  &0\\
    0 & 0   &1  &0  &0  &1  &2  &2  &3\\
\hline
    3 & 3   &1  &2  &0  &0  &3  &0  &0\\
    0 & 1   &0  &2  &1  &1  &0  &3  &0\\
    0 & 0   &1  &0  &0  &2  &2  &3  &2\\
\hline
    3 & 1   &3  &3  &0  &0  &2  &0  &0\\
    0 & 1   &0  &2  &2  &3  &0  &2  &0\\
    0 & 0   &1  &0  &0  &3  &2  &1  &1\\
  \end{array}
\right]; \mathbf{G}^{-1}= \left[
  \begin{array}{ccc|ccc|ccc}
    2 & 0   &0  &2  &0  &0  &2  &0  &0\\
    1 & 3   &0  &3  &3  &0  &2  &3  &0\\
    1 & 0   &3  &2  &0  &3  &3  &0  &3\\
\hline
    3 & 1   &0  &2  &1  &0  &1  &1  &0\\
    0 & 2   &0  &0  &1  &0  &0  &3  &0\\
    0 & 1   &3  &0  &2  &2  &0  &3  &1\\
\hline
    3 & 0   &1  &1  &0  &1  &2  &0  &1\\
    0 & 3   &1  &0  &1  &3  &0  &2  &2\\
    0 & 0   &2  &0  &0  &2  &0  &0  &1\\
  \end{array}
\right].
\end{align}
Notice that the matrices of (\ref{eq:63GGp_BiOrthogonal}) have exactly the transpose structure of the matrices of (\ref{eq:63GGp_Orthogonal}). Hence, this code of (\ref{eq:63GGp_BiOrthogonal}) is a dual solution of (\ref{eq:63GGp_Orthogonal}), thereby providing switched projection vector solutions and lowering the complexity for parity node repair.

\begin{figure}[t]
\begin{center}
{\epsfig{figure=./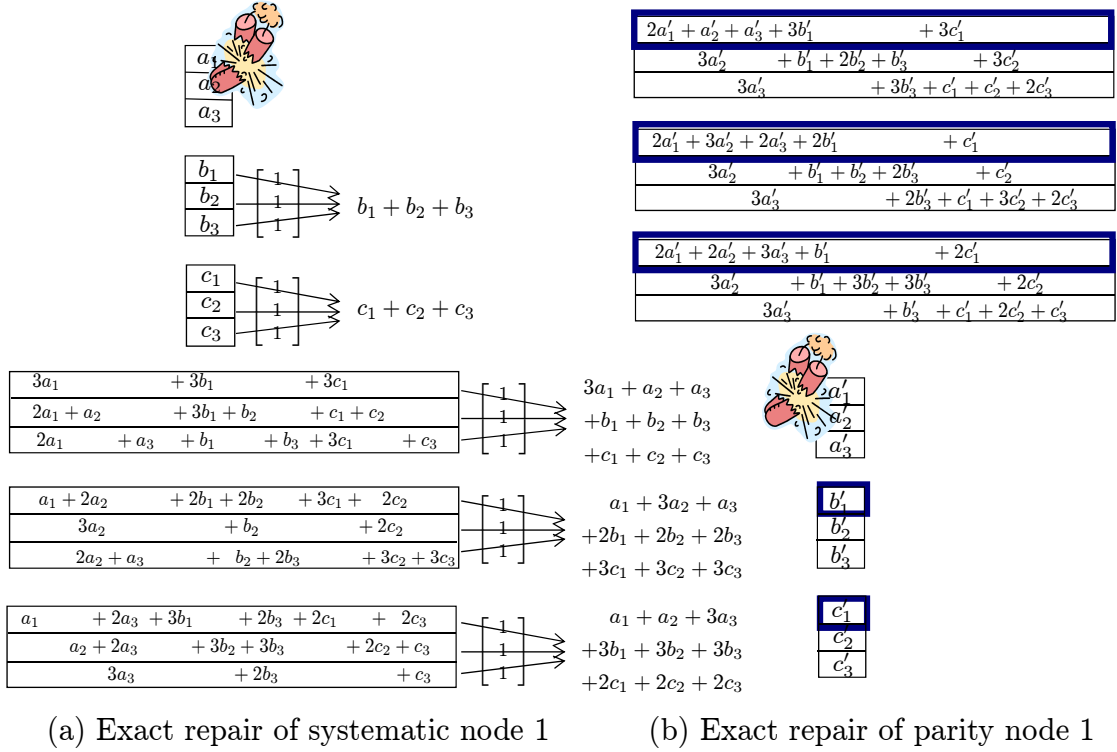, angle=0, width=0.9\textwidth}}
\end{center}
\caption{Bi-Orthogonal case: Illustration of exact repair for a $(6,3,5)$ E-MSR code defined over ${\sf GF}(4)$ where a generator polynomial $g(x)= x^2 + x +1$. We use $\mathbf{U}=\mathbf{I}$. For parity node repair, the solution for projection vectors is much simpler. We download only the first equation from each survivor node; Systematic node repair is a bit involved: setting all of the projection vectors as $2^{-1} \mathbf{v}_1 = (1,1,1)^t$.} \label{fig:63_EMSR_BiOrthogonal}
\end{figure}

Fig.~\ref{fig:63_EMSR_BiOrthogonal} shows an example for exact repair of $(a)$ systematic node 1 and $(b)$ parity node 1. Reverse to the previous case, exact repair of parity nodes is now much simpler. In this example, by downloading only the first equation from each survivor node, we can successfully recover $\mathbf{a}'$. On the contrary, systematic node repair is a bit involved: a projection vector solution is $2^{-1} \mathbf{v}_1=(1,1,1)^t$. Using this vector, we can achieve simultaneous interference alignment, thereby decoding the desired components of $\mathbf{a}'$.

\section{Generalization: $n \geq 2k; d \geq 2k-1 $}
\label{sec-generalization}

Theorem~\ref{theorem-63} gives insights into generalization to $(2k,k,k-1)$ E-MSR codes. The key observation is that assuming $\mathcal{M}=k(d-k+1)$, storage cost is $\alpha = \mathcal{M}/k=d-k+1=k$ and this number is equal to the number of systematic nodes and furthermore matches the number of parity nodes. Notice that the storage size matches the size of encoding matrices, which determines the number of linearly independent vectors of $\{\mathbf{v} \}:= \{ \mathbf{v}_1, \cdots \}$.
In this case, therefore, we can generate $k$ linearly independent vectors $\{\mathbf{v} \}:= \{ \mathbf{v}_1, \cdots, \mathbf{v}_k \}$ and corresponding $\{\mathbf{u} \}:= \{ \mathbf{u}_1, \cdots, \mathbf{u}_k \}$ through the appropriate choice of $\mathbf{M}$ to design $(2k,k,k-1)$ E-MSR codes.

\subsection{Case: $n = 2k$}

\begin{theorem}[$(2k,k,k-1)$ E-MSR Codes]
\label{theorem-2kk}
Let $\mathbf{M}$ be a Cauchy matrix:
\begin{align*}
\mathbf{M} = \left[
                \begin{array}{cccc}
                  m_1^{(1)} & m_1^{(2)} & \cdots    & m_{1}^{(k)} \\
                  m_2^{(1)} & m_2^{(2)} & \cdots    & m_{2}^{(k)} \\
                  \vdots    & \vdots    & \ddots    & \vdots \\
                  m_k^{(1)} & m_k^{(2)} & \cdots    & m_{k}^{(k)} \\
                \end{array}
              \right],
\end{align*}
where each element $m_{j}^{(i)}\in {\sf GF} (q)$, where $q \geq 2k$. Suppose
\begin{align}
\begin{split}
&\mathbf{V} = [\mathbf{v}_1, \cdots, \mathbf{v}_k] \textrm{ is invertible and} \\
 &\mathbf{U}= \kappa^{-1} \mathbf{V}' \mathbf{M},
 \end{split}
\end{align}
  where $\mathbf{V}'= (\mathbf{V}^t)^{-1}$ and $\kappa$ is an arbitrary non-zero value $\in \mathbb{F}_q$ such that $1-\kappa^2 \neq 0$. Also assume that encoding matrices are given by
\begin{align}
\begin{split}
\mathbf{G}_{1}^{(1)} &= \mathbf{u}_{1} \mathbf{v}_1^t + m_1^{(1)} \mathbf{I},
\cdots,
\mathbf{G}_{k}^{(1)} = \mathbf{u}_{1} \mathbf{v}_k^t + m_k^{(1)} \mathbf{I}, \\
&\vdots \qquad \qquad \qquad  \ddots \qquad \qquad \qquad  \vdots \\
\mathbf{G}_{1}^{(k)} &= \mathbf{u}_{k} \mathbf{v}_1^t + m_1^{(k)} \mathbf{I},
\cdots,
\mathbf{G}_{k}^{(k)} = \mathbf{u}_{k} \mathbf{v}_k^t + m_k^{(k)} \mathbf{I}, \\
\end{split}
\end{align}
where $\mathbf{G}_l^{(i)}$ indicates an encoding matrix for parity node $i$, associated with information unit $l$.
Then, the code satisfies the MDS property and achieves the MSR point under exact repair constraints of all nodes.
\end{theorem}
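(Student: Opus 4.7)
The plan is to verify the three properties that jointly constitute the claim---(a) exact repair of each systematic node, (b) exact repair of each parity node, and (c) the MDS reconstruction property---by lifting each argument used for the $(6,3,5)$ case to arbitrary $k$. The structural ingredients are already in place: the encoding matrices $\mathbf{G}_l^{(i)} = \mathbf{u}_i \mathbf{v}_l^t + m_l^{(i)}\mathbf{I}$ are elementary, $\mathbf{V}$ and $\mathbf{U}$ are invertible (the latter because $\mathbf{V}'$ and $\mathbf{M}$ are invertible and $\kappa \neq 0$), and $\mathbf{M}$ is Cauchy so every square submatrix is invertible over ${\sf GF}(q)$ with $q \geq 2k$. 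I will treat these as the ambient hypotheses throughout.

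For exact repair of systematic node $l$, I would instruct each of the $d=2k-1$ survivors to project along the dual basis vector $\mathbf{v}_l'$, where $\mathbf{V}' = (\mathbf{V}^t)^{-1}$ so that $\mathbf{v}_l'^t \mathbf{v}_{l'} = \delta(l-l')$. For every $l' \neq l$ and every parity index $i$, this forces $\mathbf{G}_{l'}^{(i)} \mathbf{v}_l' = m_l^{(i)} \mathbf{v}_l'$, so the interference contributed by each information unit $l' \neq l$ collapses into a single common one-dimensional subspace. The $k-1$ surviving systematic nodes already deliver the corresponding interference symbols directly, allowing the newcomer to subtract all interference cleanly. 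What remains are $k$ equations from the parity nodes that, after interference cancellation, take the form $\mathbf{a}_l^t \mathbf{G}_l^{(i)} \mathbf{v}_l'$ for $i=1,\dots,k$. The decodability condition reduces to invertibility of $\bigl[\mathbf{G}_l^{(1)}\mathbf{v}_l' \mid \cdots \mid \mathbf{G}_l^{(k)}\mathbf{v}_l'\bigr] = \mathbf{U}\,\mathrm{diag}(v_l^t v_l') + (\text{scalar shifts})\cdot \mathbf{v}_l'$; since $\mathbf{U}$ is invertible and the $m_l^{(i)}$ are nonzero Cauchy entries, this matrix is invertible by a rank-one perturbation argument.

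For exact repair of parity nodes, the main obstacle is the generalization of Lemma~\ref{lemma:TensorElementaryMatrix} to arbitrary $k$. I would prove that under $\kappa \mathbf{U} = \mathbf{V}' \mathbf{M}$ with $\mathbf{M}$ invertible and $1-\kappa^2 \neq 0$, the inverse $\mathbf{G}^{-1}$ of the composite $k^2 \times k^2$ encoding matrix has the transpose-type elementary-matrix form
\begin{align*}
\mathbf{G}_l^{(i)\prime} = \tfrac{1}{1-\kappa^2}\bigl(\mathbf{v}_i' \mathbf{u}_l'^t - \kappa^2\, \tilde m_l^{(i)} \mathbf{I}\bigr),
\end{align*}
where $\{\mathbf{u}'\}$ is the dual basis of $\{\mathbf{u}\}$ and the scalars $\tilde m_l^{(i)}$ are the entries of $\mathbf{M}^{-1}$. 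The key identity driving this is $\mathbf{V}^t \mathbf{U} = \kappa^{-1} \mathbf{M}$ (obtained by transposing the coupling relation), which makes $\mathbf{G}\cdot\mathbf{G}^{-1} = \mathbf{I}$ reduce to block-entrywise identities that separate into the elementary part and the scalar part. Once this dual structure is in hand, parity repair is transparent: the roles of $(\mathbf{v}, \mathbf{u}, \mathbf{M})$ and $(\mathbf{u}', \mathbf{v}', \mathbf{M}^{-1})$ are swapped, so the systematic-repair argument above applies verbatim to the remapped data $\mathbf{a}_l'$, with projection vector $\mathbf{u}_l$ playing the role previously played by $\mathbf{v}_l'$.

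For the MDS property, I would enumerate Data-Collector patterns by the number $s$ of surviving systematic nodes accessed (so $k-s$ parity nodes are accessed). The $s=k$ case is trivial, and $s=0$ follows from the dual structure just established. For $0 < s < k$, I would first use the $s$ systematic nodes to cancel the contributions of the $s$ known information units from the $k-s$ parity equations, reducing reconstruction to inverting a $(k-s)\times(k-s)$ block of elementary encoding matrices indexed by the missing information units and the selected parities. Generalizing the Gaussian-elimination tableau shown for the $(6,3,5)$ case, and exploiting the dual basis $\{\mathbf{u}'\}$, the invertibility of this block reduces to invertibility of the corresponding $(k-s)\times(k-s)$ submatrix of $\mathbf{M}$, which holds because every submatrix of a Cauchy matrix is Cauchy and therefore invertible. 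Finally, the field-size bound $q \geq 2k$ is precisely what guarantees existence of two disjoint length-$k$ injective sequences $\{x_i\}, \{y_j\} \subset {\sf GF}(q)$ needed to define $\mathbf{M}$. I expect the dual-structure lemma to be the only genuinely new calculation; all other steps are direct transcriptions of the $k=3$ argument.
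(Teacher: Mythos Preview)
Your proposal is correct and follows essentially the same route as the paper's proof in Appendix~\ref{appen-Theorem2kk}: projection along $\mathbf{v}_l'$ for systematic repair, the generalized version of Lemma~\ref{lemma:TensorElementaryMatrix} to obtain the dual elementary-matrix structure for parity repair, and the Gaussian-elimination reduction of the MDS check to invertibility of Cauchy submatrices. Two minor slips to tidy up: the aligned interference scalar should be $m_{l'}^{(i)}$ rather than $m_l^{(i)}$, and your decodability matrix is simply $\mathbf{U} + \mathbf{v}_l'\,[m_l^{(1)},\dots,m_l^{(k)}]$, whose invertibility follows from the matrix determinant lemma together with $1+\kappa \neq 0$ (implied by $1-\kappa^2 \neq 0$).
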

\begin{proof}
See Appendix \ref{appen-Theorem2kk}.
\end{proof}

\begin{remarks}
Note that the minimum required alphabet size is $2k$. As mentioned earlier, this is because we employ a Cauchy matrix for ensuring the invertibility of any submatrices of $\mathbf{M}$. One may customize codes to find smaller alphabet-size codes.
\end{remarks}

\subsection{Case: $n \geq 2k; d \geq 2k-1$}

Now what if $k$ is less than the size $(=\alpha = d-k+1)$ of encoding matrices, i.e., $ d \geq 2k-1$? Note that this case automatically implies that $n \geq 2k$, since $n \geq d+1$. The key observation in this case is that the encoding matrix size is bigger than $k$, and therefore we have more degrees of freedom (a larger number of linearly independent vectors) than the number of constraints. Hence, exact repair of systematic nodes becomes transparent. This was observed as well in \cite{KumarRamchandran_MSR}, where it was shown that for this regime, exact repair of systematic nodes only can be guaranteed by judiciously manipulating $(2k,k,k-1)$ codes through a puncturing operation.

We show that the puncturing technique in \cite{KumarRamchandran_MSR} (meant for exact repair of systematic nodes and for a special case of our generalized codes) together with our repair construction schemes can also carry over to ensure exact repair of \emph{all} nodes even for the generalized family of codes.
The recipe for this has two parts:
\begin{enumerate}
\item[1.] Constructing a target code from a larger code through the puncturing technique.
\item[2.] Showing that the resulting target code indeed ensures exact repair of all nodes as well as the MDS-code property for our generalized family of codes.
\end{enumerate}

The first part contains the following detailed steps:
\begin{enumerate}
  \item[1($a$)] Using Theorem~\ref{theorem-2kk}, construct a larger $(2n-2k, n-k, 2n-2k-1)$ code with a finite field size of $q \geq 2n-2k$.
  \item[1($b$)] Remove all the elements associated with the $(n-2k)$ information units (e.g., from the $(k+1)$th to the $(n-k)$th information unit). The number of nodes is then reduced by $(n-2k)$ and so are the number of information units and the number of degrees. Hence, we obtain the $(n,k,n-1)$ code.
  \item[1($c$)] Prune the last $(n-1-d)$ equations in each storage node and also the last $(n-1-d)$ symbols of each information unit, while keeping the number of information units and storage nodes. We can then get the $(n,k,d)$ target code.
\end{enumerate}
Indeed, based on our framework in Section~\ref{sec-BasisFramework}, it can be shown that the resulting \emph{punctured} code described above guarantees exact repair of all nodes and MDS-code property for our generalized family of codes. Hence, we obtain the following theorem. The proof procedure is tedious and mimics that of Theorem~\ref{theorem-2kk}. Therefore, details are omitted.

\begin{theorem}[$\frac{k}{n} \leq \frac{1}{2}, d \geq 2k-1$]
\label{theorem-2kn}
Under exact repair constraints of all nodes, the optimal tradeoff of (\ref{eq-MSRpoint}) can be attained with a deterministic scheme requiring a field size of at most $2(n-k)$.
\end{theorem}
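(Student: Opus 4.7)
The plan is to reduce the general $(n,k,d)$ case with $n \geq 2k$ and $d \geq 2k-1$ to the base case already covered by Theorem~\ref{theorem-2kk}, following the three-step puncturing recipe outlined just before the theorem statement. I would first invoke Theorem~\ref{theorem-2kk} with the parameter $k$ replaced by $n-k$: this produces a $(2(n-k),\,n-k,\,2(n-k)-1)$ E-MSR code over any field of size $q \geq 2(n-k)$, with $(n-k) \times (n-k)$ encoding matrices of the form $\mathbf{G}_{l}^{(i)} = \mathbf{u}_i \mathbf{v}_l^t + m_l^{(i)} \mathbf{I}$, where $\mathbf{M} = [m_l^{(i)}]$ is Cauchy and $\{\mathbf{v}\}, \{\mathbf{u}\}$ satisfy the common-eigenvector and dual-basis relations driving the proof of Theorem~\ref{theorem-63}.

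Next I would carry out step $1(b)$: retain only $k$ of the $n-k$ systematic information units (say $\mathbf{a}^{(1)}, \ldots, \mathbf{a}^{(k)}$, setting the rest to zero) and drop the corresponding $n-2k$ systematic nodes, while keeping all $n-k$ parity nodes. This yields an $(n,k,n-1)$ code in which each node still stores $\alpha = n-k$ symbols. The common-eigenvector structure is inherited: the vectors $\mathbf{v}_1,\ldots,\mathbf{v}_k$ still serve as projection directions for exact repair of the surviving systematic nodes, because forcing $\mathbf{a}^{(l)} = 0$ for $l>k$ only removes rows from the generator and does not affect the identities $\mathbf{G}_l^{(i)} \mathbf{v}_l = m_l^{(i)} \mathbf{v}_l$ used in the simultaneous-alignment argument. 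Parity-node repair goes through the dual code structure (\ref{eq-63_PrimeEM}) restricted to the retained block, and the vectors $\mathbf{u}_i$ continue to realize simultaneous alignment for the remapped prime variables.

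Then step $1(c)$ prunes the last $n-1-d$ equations from each helper contribution and correspondingly truncates each stored vector to $d-k+1$ symbols. Because our repair scheme acts node-wise via the \emph{same} projection vector applied identically at every helper, removing the same set of coordinates at every node preserves both the interference-alignment conditions and the rank conditions needed to decode the desired component. To verify the MDS property for the resulting $(n,k,d)$ code, I would re-run the four-case analysis of Section~\ref{sec-MDScodeProperty} generically in $k$: each case relies only on invertibility of submatrices of $\mathbf{M}$ and on the dual-basis identity (\ref{eq:uiconditions}), both of which are intrinsic to the scalar matrix $\mathbf{M}$ and the global $\{\mathbf{u}\}$--$\{\mathbf{v}\}$ relationship and are therefore untouched by the truncation of vector coordinates.

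The main obstacle I anticipate is the bookkeeping required to extend the Gaussian-elimination argument sketched for $(6,3,5)$ to the mixed systematic/parity cases in general $k$: one must show that for any $k$ surviving nodes the composite $k\alpha \times k\alpha$ matrix, after eliminating the systematic components, reduces (as in Section~\ref{sec-MDScodeProperty}) to a block with $\mathbf{M}_j$-type scalar factors and a block diagonal of $\{\mathbf{u}'\}$-rows, which is invertible by the Cauchy property. Once this generalization is in hand, the MDS property is retained under both puncturing steps, exact repair of all $n$ nodes follows from the common-eigenvector/dual-basis mechanism of Theorem~\ref{theorem-2kk} applied block-wise, and the field-size bound $q \geq 2(n-k)$ is inherited directly from the parent $(2(n-k),\,n-k,\,2(n-k)-1)$ code, yielding the claimed bound of $2(n-k)$.
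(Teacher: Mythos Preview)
Your proposal is correct and follows essentially the same route as the paper: invoke Theorem~\ref{theorem-2kk} on the enlarged parameters $(2(n-k),\,n-k,\,2(n-k)-1)$, then apply the two puncturing steps $1(b)$ and $1(c)$ described just before the theorem, and finally argue that exact repair and the MDS property survive via the common-eigenvector/dual-basis machinery of Section~\ref{sec-BasisFramework}; the paper itself omits the verification details as ``tedious'' and mimicking Theorem~\ref{theorem-2kk}. One small slip to correct when you write it out: the projection vector for repairing systematic node~$l$ is the dual-basis vector $\mathbf{v}_l'$, not $\mathbf{v}_l$, and the relevant alignment identity is $\mathbf{G}_l^{(j)}\mathbf{v}_i' = m_l^{(j)}\mathbf{v}_i'$ for $j\neq i$ together with $\mathbf{G}_l^{(i)}\mathbf{v}_i' = m_l^{(i)}\mathbf{v}_i' + \mathbf{u}_l$.
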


\begin{figure}[t]
\begin{center}
{\epsfig{figure=./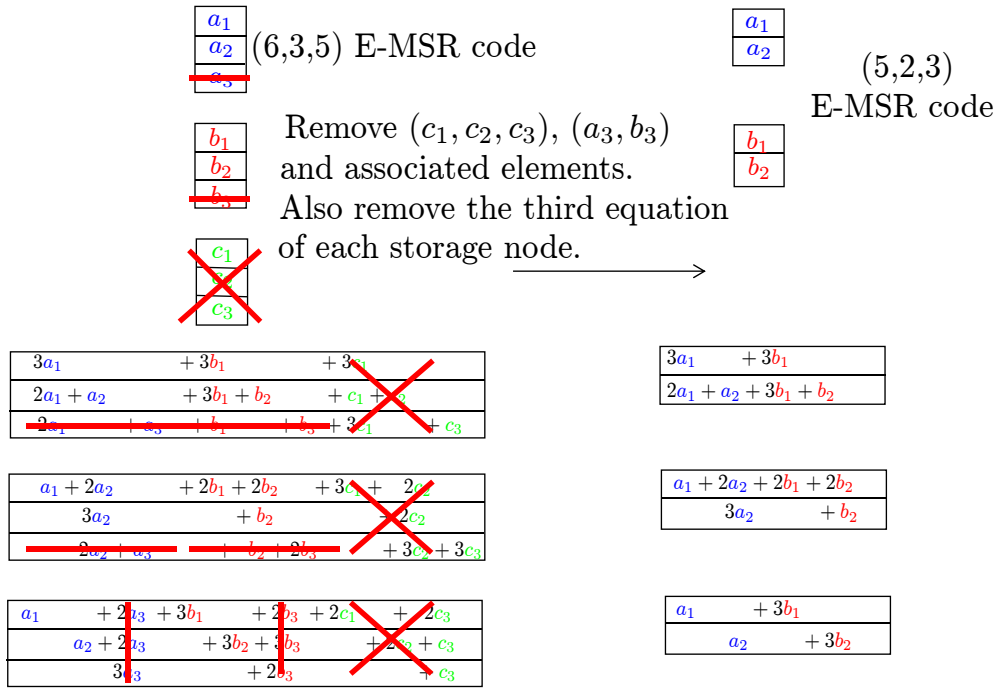, angle=0, width=0.8\textwidth}}
\end{center}
\caption{Bi-Orthogonal case: Illustration of the construction of a $(5,2,3)$ E-MSR code from a $(6,3,5)$ code defined over ${\sf GF}(4)$. For a larger code, we adopt the $(6,3,5)$ code in Fig.~\ref{fig:63_EMSR_BiOrthogonal}. First, we remove all the elements associated with the last $(n-2k)=1$ information unit (``$\mathbf{c}$''). Next, we prune symbols $(a_3,b_3)$ and associated elements. Also we remove the last equation of each storage node. Finally we obtain the $(n,k,d)=(5,2,3)$ target code.} \label{fig:63_52_convert_b}
\end{figure}

\begin{example}
Fig.~\ref{fig:63_52_convert_b} illustrates how to construct an $(n,k,d)=(5,2,3)$ target code based on the above recipe. First construct the $(2n-2k,n-k,2n-2k-1)=(6,3,5)$ code, which is larger than the $(5,2,3)$ target code, but which belongs to the category of $n=2k$. For this code, we adopt the bi-orthogonal case example in Fig.~\ref{fig:63_EMSR_BiOrthogonal}. For this code, we now remove all the elements associated with the last $(n-2k)=1$ information unit, which corresponds to $(c_1,c_2,c_3)$. Next, prune the last symbol $(a_3,b_3)$ of each information unit and associated elements to shrink the storage size into $2$. We can then obtain the $(5,2,3)$ target code. Exact repair and the MDS-code property of the resulting code can be verified based on the proposed framework in Section~\ref{sec-BasisFramework}.
\end{example}

\section{Generalization: $ k \leq 3$}

As a side generalization, we consider the case of $k \leq 3$. The interesting special case of the $(5,3)$ E-MSR code\footnote{Independently, the authors in \cite{Cullina_MSR} found $(5,3)$ codes defined over ${\sf GF}(3)$, based on a search algorithm.} will be focused on, since it is not covered by the above case of $\frac{k}{n} \leq \frac{1}{2}$. For this case, we propose another interference alignment technique building on an eigenvector concept.

\begin{theorem}[$k \leq 3$]
The MSR point can be attained with a deterministic scheme requiring a finite-field size of at most $2n-2k$.
\end{theorem}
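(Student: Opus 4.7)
The cases $k=1$ (pure replication) and $k=2$ (handled in \cite{Wu:ISIT}) are already settled, as is the subcase $k=n-1$ (which covers $(4,3,3)$). For $k=3$ with $n\geq 6$ we have $k/n\leq 1/2$, so Theorem~\ref{theorem-2kn} applies. Thus the only remaining essential case is $(n,k)=(5,3)$, and within it the higher-degree point $(5,3,4)$; the code $(5,3,3)$ will then be obtained from $(5,3,4)$ by the same puncturing step used in Section~\ref{sec-generalization}. My plan is therefore to construct and verify an explicit $(5,3,4)$ E-MSR code over a field of size at most $2(n-k)=4$.

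For $(5,3,4)$ we have $\mathcal{M}=6$, $\alpha=2$ and $\gamma/d=1$, so all encoding matrices are $2\times 2$. Denote the information units stored at the three systematic nodes by $\mathbf{a},\mathbf{b},\mathbf{c}$, and for $i=1,2$ let parity node $i$ store $\mathbf{a}^t\mathbf{A}_i+\mathbf{b}^t\mathbf{B}_i+\mathbf{c}^t\mathbf{C}_i$. For repairing systematic node $1$ we download one scalar from each of the four survivors; aligning the interference from $\mathbf{b}$ and from $\mathbf{c}$ into a $1$-dimensional subspace forces the parity-side projection vectors to satisfy $\mathbf{v}_4^{(\alpha)}\propto \mathbf{B}_2^{-1}\mathbf{B}_1\mathbf{v}_3^{(\alpha)}$ and $\mathbf{C}_1\mathbf{v}_3^{(\alpha)}\propto \mathbf{C}_2\mathbf{v}_4^{(\alpha)}$, which together say that $\mathbf{v}_3^{(\alpha)}$ is an eigenvector of the product
\begin{equation*}
(\mathbf{B}_1^{-1}\mathbf{B}_2)(\mathbf{C}_2^{-1}\mathbf{C}_1).
\end{equation*}
Writing out the symmetric conditions for systematic nodes $2,3$ and for parity nodes $1,2$ gives five eigenvector conditions on five analogous $2\times 2$ product matrices, together with a $2\times 2$ full-rank condition on the desired-signal block in each case. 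This is the ``eigenvector'' interference-alignment variant promised in the introduction, replacing the common-eigenvector framework of Section~\ref{sec-BasisFramework} which is unavailable here since $k>n/2$.

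The construction proceeds in three concrete steps. First, normalize by taking $\mathbf{A}_1=\mathbf{B}_1=\mathbf{C}_1=\mathbf{I}$, collapsing the eigenvector conditions for all five repairs to a statement about the three matrices $\mathbf{A}_2,\mathbf{B}_2,\mathbf{C}_2$ alone; the five product matrices then reduce to simple expressions like $\mathbf{B}_2\mathbf{C}_2^{-1}$, $\mathbf{A}_2\mathbf{C}_2^{-1}$, and so on. Second, parametrize $\mathbf{A}_2,\mathbf{B}_2,\mathbf{C}_2$ so that they are simultaneously diagonalizable in some basis $\{\mathbf{w}_1,\mathbf{w}_2\}$ over $\mathbb{F}_q$; then every product above has $\mathbf{w}_1,\mathbf{w}_2$ as eigenvectors, so one can pick an eigenvector for each repair and the alignment conditions are met automatically. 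Third, add the MDS constraints: the $\binom{5}{3}=10$ submatrices of the block generator matrix must be invertible. Under the diagonalizable ansatz, these collapse to conditions on the six scalar eigenvalues of $\mathbf{A}_2,\mathbf{B}_2,\mathbf{C}_2$, which can be encoded by a Cauchy-like $3\times 2$ matrix of scalars and thus satisfied over a field of size at most $2(n-k)=4$. The $(5,3,3)$ code is then obtained by deleting one coordinate from each storage node exactly as in the puncturing step of Theorem~\ref{theorem-2kn}.

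The main obstacle will be step two: a generic $2\times 2$ matrix over $\mathbb{F}_q$ with $q$ small need not have its eigenvalues in $\mathbb{F}_q$, so the simultaneous-diagonalizability requirement could fail over $\mathbb{F}_3$ or $\mathbb{F}_4$. I expect to resolve this by choosing the common eigenbasis $\{\mathbf{w}_1,\mathbf{w}_2\}$ first and the eigenvalues from a Cauchy parameterization second, so that the characteristic polynomials split by construction; the remaining delicate point is then verifying the ten MDS minors, which at this stage are explicit Vandermonde/Cauchy-type determinants in the chosen eigenvalues and can be checked by inspection over the claimed field. Independently of the specific construction, the required finite-field size matches the bound $2n-2k$ stated in the theorem.
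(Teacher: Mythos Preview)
Your case reduction is fine and matches the paper's: only $(5,3,4)$ needs a new construction, and the eigenvector-based alignment is indeed the right mechanism there. The gap is in your step two. If you normalize $\mathbf{A}_1=\mathbf{B}_1=\mathbf{C}_1=\mathbf{I}$ and then force $\mathbf{A}_2,\mathbf{B}_2,\mathbf{C}_2$ to be \emph{simultaneously} diagonalizable in a basis $\{\mathbf{w}_1,\mathbf{w}_2\}$, the interference-alignment constraints for repairing node~1 force the two parity projection vectors $\mathbf{v}_3,\mathbf{v}_4$ to lie along the \emph{same} common eigenvector, say $\mathbf{w}_1$ (otherwise $\mathbf{B}_2\mathbf{v}_4$ and $\mathbf{v}_3$ are not collinear). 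But then the desired-signal block is
\[
\bigl[\mathbf{A}_1\mathbf{v}_3\;\;\mathbf{A}_2\mathbf{v}_4\bigr]=\bigl[\mathbf{w}_1\;\;a_1\mathbf{w}_1\bigr],
\]
which has rank~1, and $\mathbf{a}$ is not decodable. In short, making $\mathbf{A}_2$ share the eigenbasis that aligns $\mathbf{b}$ and $\mathbf{c}$ automatically collapses the space you need for $\mathbf{a}$. The alignment eigenvector must \emph{not} be an eigenvector of the matrix carrying the desired unit.

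The paper avoids exactly this trap by abandoning simultaneous diagonalizability: it takes the $\mathbf{A}_i,\mathbf{B}_i,\mathbf{C}_i$ to be an explicit mix of lower- and upper-triangular $2\times2$ matrices over ${\sf GF}(3)$ (Lemma~\ref{lemma-(53)}). Triangularity guarantees that the relevant $2\times2$ products such as $\mathbf{B}_2\mathbf{C}_2^{-1}\mathbf{C}_1\mathbf{B}_1^{-1}$ have their eigenvalues in the base field (so the alignment eigenvector exists), while the alternating upper/lower pattern prevents the desired-signal block from degenerating. The MDS and decodability checks are then done directly on the explicit matrices rather than via a Cauchy parameterization. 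If you want to salvage your outline, you need to drop the simultaneous-diagonalizability ansatz and instead arrange only that each of the five product matrices in (\ref{eq-53conditions3}) has an eigenvector over $\mathbb{F}_q$ that is \emph{not} shared by the corresponding desired-signal matrix; the triangular construction is one clean way to do this.
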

\begin{proof}
The case of $k=1$ is trivial. By Theorems \ref{theorem-2kk} and \ref{theorem-2kn}, we prove the case of $k=2$. However, additional effort is needed to prove the case of $k=3$.
By Theorems \ref{theorem-2kk} and \ref{theorem-2kn}, $(n,3)$ for $n \geq 6$ can be proved. But  $(5,3)$ codes are not in the class. In Section~\ref{sec:53_E-MSR}, we will address this case to complete the proof.
\end{proof}

\begin{remarks} In order to cover general $n$, we provide a looser bound on the required finite-field size: $q \geq 2n-2k$. In fact, for the $(5,3)$ code (that will be shown in Lemma \ref{lemma-(53)}), a smaller finite-field size of $q=3$ ($<4=2n-2k$) is enough for construction. We have taken the maximum of the required field sizes of all the cases.
\end{remarks}

\subsection{$(5,3)$ E-MSR Codes}
\label{sec:53_E-MSR}

We consider $d=4$ and $\mathcal{M}=6$. The cutset bound (\ref{eq-MSRpoint}) then gives the fundamental limits of: storage cost $\alpha=2$ and repair-bandwidth-per-link=1; hence, the dimension of encoding matrices is 2-by-2. Note that the size is \emph{less} than the number of systematic nodes. Therefore, our earlier framework does not cover this category. In fact, the $(5,3)$ code is in the case of $n+1 = 2k$, where it was shown in \cite{KumarRamchandran_MSR} that there exist codes that achieve the cutset bound under exact repair of systematic nodes only (not including parity nodes).

We propose an eigenvector-based interference alignment technique to prove the code existence under exact repair of \emph{all} nodes. Let $\mathbf{a}=(a_1,a_2)^t$, $\mathbf{b} = (b_1,b_2)^t$ and $\mathbf{c}=(c_1,c_2)^t$. For exact repair, we connect to $4 (=d)$ nodes to download a one-dimensional scalar value from each node.
Fig.~\ref{fig:53example} illustrates exact repair of node 1. We download four equations from survivor nodes: $\mathbf{b}^{t} \mathbf{v}_{\alpha 1}$;
$\mathbf{c}^{t} \mathbf{v}_{\alpha 2}$; $\mathbf{a}^t (\mathbf{A}_1 \mathbf{v}_{\alpha 3}) +\mathbf{b}^t (\mathbf{B}_1  \mathbf{v}_{\alpha 3}) +\mathbf{c}^t (\mathbf{C}_1  \mathbf{v}_{\alpha 3})$; $\mathbf{a}^t ( \mathbf{A}_2 \mathbf{v}_{\alpha 4} ) + \mathbf{b}^t ( \mathbf{B}_2  \mathbf{v}_{\alpha 4} ) +\mathbf{c}^t (\mathbf{C}_2  \mathbf{v}_{\alpha 4})$.
The approach is different from that of our earlier proposed framework. Instead an idea here consists of three steps: (1) choosing projection vectors for achieving interference alignment; (2) gathering all the alignment constraints and the MDS-code constraint; (3) designing the encoding matrices that satisfy all the constraints. Notice the design of encoding matrices is the last part.

Here are details. Note that there are 6 unknown variables: 2 desired unknowns $(a_1,a_2)$ and 4 undesired unknowns $(b_1,b_2,c_1,c_2)$. Therefore, it is required to align $(b_1,b_2, c_1,c_2)$ onto at least 2-dimensional linear space. We face the challenge that appeared in the $(6,3,5)$ code example in Fig.~\ref{fig:63_EMSR_Challenge}. Projection vectors $\mathbf{v}_{\alpha3}$ and $\mathbf{v}_{\alpha 4}$ affect interference alignment $\mathbf{b}$ and $\mathbf{c}$ simultaneously.
Therefore, we need simultaneous interference alignment.
\begin{figure}[t]
\begin{center}
{\epsfig{figure=./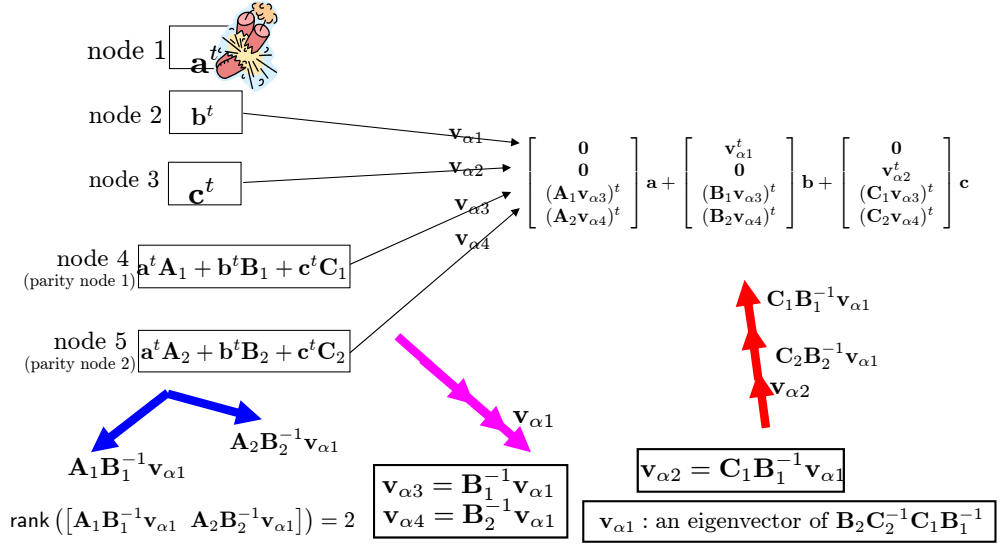, angle=0, width=0.8\textwidth}}
\end{center}
\caption{Eigenvector-based interference alignment for $(5,3)$ E-MSR codes. First we align interference ``$\mathbf{b}$'' by setting $\mathbf{v}_{\alpha 3} = \mathbf{B}_1^{-1} \mathbf{v}_{\alpha 1}$ and $\mathbf{v}_{\alpha 4} = \mathbf{B}_2^{-1} \mathbf{v}_{\alpha 1}$. Next, partially align interference of ``$\mathbf{c}$'' by setting $\mathbf{v}_{\alpha 2} = \mathbf{C}_1 \mathbf{B}_1^{-1} \mathbf{v}_{\alpha 1}$. Finally, choosing $\mathbf{v}_{\alpha 1}$ as an eigenvector of $\mathbf{B}_{2} \mathbf{C}_2^{-1} \mathbf{C}_1 \mathbf{B}_1^{-1}$, we can achieve interference alignment for $\mathbf{c}$.} \label{fig:53example}
\end{figure}
To solve this problem, we introduce an eigenvector-based interference alignment scheme.

First choose $\mathbf{v}_{\alpha 3}$ and $\mathbf{v}_{\alpha 4}$ such that $\mathbf{v}_{\alpha 3} = \mathbf{B}_1^{-1} \mathbf{v}_{\alpha 1}$ and $\mathbf{v}_{\alpha 4} = \mathbf{B}_2^{-1} \mathbf{v}_{\alpha 1}$, thereby achieving interference alignment for ``$\mathbf{b}$''. Observe the interfering vectors associated with ``$\mathbf{c}$'':
\begin{align*}
\mathbf{v}_{\alpha 2}; \;\mathbf{C}_1 \mathbf{B}_1^{-1} \mathbf{v}_{\alpha 1}; \; \mathbf{C}_2 \mathbf{B}_2^{-1} \mathbf{v}_{\alpha 1}.
\end{align*}
The first and second vectors can be aligned by setting $ \mathbf{v}_{\alpha 2} = \mathbf{C}_1 \mathbf{B}_1^{-1} \mathbf{v}_{\alpha 1}$. Now what about for the following two vectors: $\mathbf{C}_1 \mathbf{B}_1^{-1} \mathbf{v}_{\alpha 1}$ and $\mathbf{C}_2 \mathbf{B}_2^{-1}  \mathbf{v}_{\alpha 1}$?
Suppose that the associated matrices ($\mathbf{C}_1 \mathbf{B}_1^{-1}$ and $\mathbf{C}_2 \mathbf{B}_2^{-1}$) and the projection vector $\mathbf{v}_{\alpha 1}$ are randomly chosen. Then, the these two vectors are not guaranteed to be aligned. However, a judicious choice of $\mathbf{v}_{\alpha 1}$ makes it possible to align them. The idea is to choose $\mathbf{v}_{\alpha 1}$ as an eigenvector of
$\mathbf{B}_2 \mathbf{C}_2^{-1} \mathbf{C}_1 \mathbf{B}_1^{-1}$. Since $\mathbf{v}_{\alpha 1}$ can be chosen arbitrarily, this can be easily done.
Lastly consider the condition for ensuring the decodability of desired signals: $\mathsf{rank} \left( \left[ \mathbf{A}_1 \mathbf{B}_1^{-1} \mathbf{v}_{\alpha 1} \;\; \mathbf{A}_2  \mathbf{B}_2^{-1} \mathbf{v}_{\alpha 1} \right] \right) = 2$.

We repeat the procedure for exact repair of ``$\mathbf{b}$'' and ``$\mathbf{c}$''. For parity nodes, we employ the remapping technique described earlier:
\begin{align}
\left[
  \begin{array}{ccc}
   \mathbf{a}' \\
    \mathbf{b}' \\
    \mathbf{c}' \\
  \end{array}
\right]:=\left[
   \begin{array}{ccc}
     \mathbf{A}_1^t  & \mathbf{B}_1^t & \mathbf{C}_1^t \\
     \mathbf{A}_2^t  & \mathbf{B}_2^t & \mathbf{C}_2^t \\
\mathbf{0} & \mathbf{0} & \mathbf{I} \\
   \end{array}
 \right] \left[
  \begin{array}{ccc}
   \mathbf{a} \\
    \mathbf{b} \\
    \mathbf{c} \\
  \end{array}
\right], \left[
    \begin{array}{ccc}
      \mathbf{A}_1'  & \mathbf{A}_2' & \mathbf{0}\\
      \mathbf{B}_1'  & \mathbf{B}_2' & \mathbf{0}\\
       \mathbf{C}_1' & \mathbf{C}_2' &  \mathbf{I} \\
    \end{array}
  \right]:=\left[
    \begin{array}{ccc}
      \mathbf{A}_1  & \mathbf{A}_2 & \mathbf{0}\\
      \mathbf{B}_1  & \mathbf{B}_2 & \mathbf{0}\\
       \mathbf{C}_1 & \mathbf{C}_2 &  \mathbf{I} \\
    \end{array}
  \right]^{-1}.
\end{align}
We gather all the conditions that need to be guaranteed for exact repair of all nodes:
\begin{align}
\begin{split}
\label{eq-53conditions}
& \mathsf{rank} \left( \left[ \mathbf{A}_1 \mathbf{B}_1^{-1} \mathbf{v}_{\alpha 1} \;\;
\mathbf{A}_2  \mathbf{B}_2^{-1} \mathbf{v}_{\alpha 1} \right] \right) = 2, \\
& \mathsf{rank} \left( \left[ \mathbf{B}_1 \mathbf{C}_1^{-1} \mathbf{v}_{\beta 1} \;\;
\mathbf{B}_2  \mathbf{C}_2^{-1} \mathbf{v}_{\beta 1} \right] \right) = 2, \\
& \mathsf{rank} \left( \left[ \mathbf{C}_1 \mathbf{A}_1^{-1} \mathbf{v}_{\gamma 1} \;\;
\mathbf{C}_2  \mathbf{A}_2^{-1} \mathbf{v}_{\gamma 1} \right] \right) = 2, \\
& \mathsf{rank} \left( \left[ \mathbf{A}_1' \mathbf{B}_1'^{-1} \mathbf{v}_{\alpha' 1} \;\;
\mathbf{A}_2'  \mathbf{B}_2'^{-1} \mathbf{v}_{\alpha' 1} \right] \right) = 2, \\
& \mathsf{rank} \left( \left[ \mathbf{B}_1' \mathbf{C}_1'^{-1} \mathbf{v}_{\beta' 1} \;\;
\mathbf{B}_2'  \mathbf{C}_2'^{-1} \mathbf{v}_{\beta' 1} \right] \right) = 2, \\
\end{split}
\end{align}
where
\begin{align}
\begin{split}
\label{eq-53conditions3}
& \mathbf{v}_{\alpha 1}: \textrm{an eigenvector of }\mathbf{B}_2 \mathbf{C}_2^{-1} \mathbf{C}_1 \mathbf{B}_1^{-1}, \\
& \mathbf{v}_{\beta 1}: \textrm{an eigenvector of }\mathbf{C}_2 \mathbf{A}_2^{-1} \mathbf{A}_1 \mathbf{C}_1^{-1}, \\
& \mathbf{v}_{\gamma 1}: \textrm{an eigenvector of }\mathbf{A}_2 \mathbf{B}_2^{-1} \mathbf{B}_1 \mathbf{A}_1^{-1}, \\
& \mathbf{v}_{\alpha' 1}: \textrm{an eigenvector of }\mathbf{B}_2' \mathbf{C}_2'^{-1} \mathbf{C}_1' \mathbf{B}_1'^{-1}, \\
& \mathbf{v}_{\beta' 1}: \textrm{an eigenvector of }\mathbf{C}_2' \mathbf{A}_2'^{-1} \mathbf{A}_1' \mathbf{C}_1'^{-1}.\\
\end{split}
\end{align}
Note that eigenvectors may not exist for the finite Galois field. However, the existence is guaranteed by carefully choosing the encoding matrices. We provide an explicit coding scheme in the following lemma.

\begin{lemma}[$(5,3)$ E-MSR Codes]
\label{lemma-(53)}
Let $\alpha, \beta \in {\sf GF}(3)$ and be non-zero. Suppose encoding matrices are given by
\begin{align}
\begin{split}
\mathbf{A}_1 &= \left[
  \begin{array}{ccc}
    2\alpha & 0  \\
    2\beta &  \beta  \\
   \end{array}
\right], \mathbf{B}_1 = \left[
  \begin{array}{ccc}
    \alpha&  2\alpha  \\
    0 & 2\beta  \\
   \end{array}
\right], \mathbf{C}_1 = \left[
  \begin{array}{ccc}
    2\alpha & 0  \\
    \beta & 2\beta  \\
   \end{array}
\right], \\
\mathbf{A}_2 &= \left[
  \begin{array}{ccc}
    2\alpha & 0  \\
    \beta &  2\beta  \\
   \end{array}
\right],
\mathbf{B}_2 = \left[
  \begin{array}{ccc}
    \alpha&  2\alpha  \\
    0 & \beta  \\
   \end{array}
\right],
\mathbf{C}_2 = \left[
  \begin{array}{ccc}
    \alpha &  0  \\
    2\beta & 2\beta  \\
   \end{array}
\right].
\end{split}
\end{align}
Then, the code satisfies the MDS property and achieves the MSR point (\ref{eq-MSRpoint}) under exact repair constraints of all nodes.
\end{lemma}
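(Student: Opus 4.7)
The plan is a direct verification that the explicit matrices in the lemma satisfy the alignment conditions (\ref{eq-53conditions}), the eigenvector conditions (\ref{eq-53conditions3}) for all five nodes, and the MDS condition. The key structural observation that makes this tractable is that every one of the six encoding matrices factors as $\mathbf{D}\tilde{\mathbf{M}}$, where $\mathbf{D}=\mathrm{diag}(\alpha,\beta)$ is invertible (by the hypothesis $\alpha,\beta\neq 0$) and $\tilde{\mathbf{M}}$ has entries in $\mathrm{GF}(3)$; for instance $\mathbf{A}_1=\mathbf{D}\left(\begin{smallmatrix}2&0\\2&1\end{smallmatrix}\right)$, and analogous decompositions hold for the other five.

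This factorization reduces every condition to a computation over $\mathrm{GF}(3)$. Each product appearing in (\ref{eq-53conditions3}), for example $\mathbf{B}_2 \mathbf{C}_2^{-1} \mathbf{C}_1 \mathbf{B}_1^{-1}$, equals $\mathbf{D}\bigl(\tilde{\mathbf{B}}_2 \tilde{\mathbf{C}}_2^{-1} \tilde{\mathbf{C}}_1 \tilde{\mathbf{B}}_1^{-1}\bigr)\mathbf{D}^{-1}$, so its eigenvectors are obtained as $\mathbf{v}=\mathbf{D}\tilde{\mathbf{v}}$ where $\tilde{\mathbf{v}}$ is the eigenvector of a fixed $2\times 2$ matrix over $\mathrm{GF}(3)$. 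Similarly, each rank condition in (\ref{eq-53conditions}) such as $\mathrm{rank}[\mathbf{A}_1\mathbf{B}_1^{-1}\mathbf{v}_{\alpha 1}\,|\,\mathbf{A}_2\mathbf{B}_2^{-1}\mathbf{v}_{\alpha 1}]=2$ is equivalent, via left multiplication by $\mathbf{D}^{-1}$, to a rank check on a canonical $2\times 2$ matrix over $\mathrm{GF}(3)$. I would carry out the three eigenvector constructions (for $\mathbf{v}_{\alpha 1}$, $\mathbf{v}_{\beta 1}$, $\mathbf{v}_{\gamma 1}$) and the associated rank verifications for the three systematic-node repairs.

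For parity repair I would first compute the remapped encoding matrices using a block-inverse identity. Writing $\mathbf{P}=\left(\begin{smallmatrix}\mathbf{A}_1 & \mathbf{A}_2\\ \mathbf{B}_1 & \mathbf{B}_2\end{smallmatrix}\right)$, the block structure of the remapping matrix yields $\left(\begin{smallmatrix}\mathbf{A}_1' & \mathbf{A}_2'\\ \mathbf{B}_1' & \mathbf{B}_2'\end{smallmatrix}\right)=\mathbf{P}^{-1}$ and $(\mathbf{C}_1'\;\mathbf{C}_2') = -(\mathbf{C}_1\;\mathbf{C}_2)\mathbf{P}^{-1}$; I can express these explicitly in $\mathbf{D}$-conjugated form and repeat the eigenvector/rank analysis to produce $\mathbf{v}_{\alpha' 1}$ and $\mathbf{v}_{\beta' 1}$. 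Finally, the MDS property is checked by enumerating the $\binom{5}{3}=10$ subsets of nodes: the all-systematic choice is immediate, the choice of both parity nodes together with systematic $\mathbf{c}$ is equivalent to invertibility of $\mathbf{P}$, and the remaining mixed cases reduce to invertibility of one of $\mathbf{A}_i,\mathbf{B}_i,\mathbf{C}_i$ or a $2\times 2$ block formed from two of them, each of which becomes a small $\mathrm{GF}(3)$ determinant after factoring out $\mathbf{D}$.

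The main obstacle is not algebraic difficulty but the simultaneity requirement. Over the small field $\mathrm{GF}(3)$ a $2\times 2$ matrix has an eigenvector only when its characteristic polynomial splits, i.e.\ when its discriminant lies in $\{0,1\}$, so the existence of the five eigenvectors demanded by (\ref{eq-53conditions3}) is \emph{not} automatic. The precise pattern of $\alpha,\beta$ and of the coefficients $1$ vs.\ $2\equiv -1$ in the stated matrices is exactly what forces every canonical characteristic polynomial to split and every alignment and MDS determinant to be nonzero at the same time. The proof is therefore a finite case check that is mechanical once set up, but that crucially relies on the deliberate structure of the construction rather than any deeper argument.
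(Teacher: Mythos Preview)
Your proposal is correct and follows essentially the same approach as the paper's own proof in Appendix~\ref{appen-theorem53}: a direct verification of the eigenvector conditions (\ref{eq-53conditions3}), the rank conditions (\ref{eq-53conditions}) after explicitly computing the primed matrices via the block inverse, and a case check for the MDS property. Your factorization $\mathbf{M}=\mathrm{diag}(\alpha,\beta)\tilde{\mathbf{M}}$ is a clean organizational device that the paper does not make explicit---it immediately explains why the result is uniform in the choice of nonzero $\alpha,\beta$ and reduces the entire check to a single canonical instance over $\mathrm{GF}(3)$---but the underlying strategy is the same.
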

\begin{proof}
See Appendix \ref{appen-theorem53}.
\end{proof}
\begin{remarks}
Note that encoding matrices are lower-triangular or upper-triangular. This structure has important properties. Not only does this structure guarantee invertibility, it can in fact guarantee the existence of eigenvectors. It turns out the structure as above satisfies all of the conditions needed for the MDS property and exact repair.
\end{remarks}

\begin{figure}[t]
\begin{center}
{\epsfig{figure=./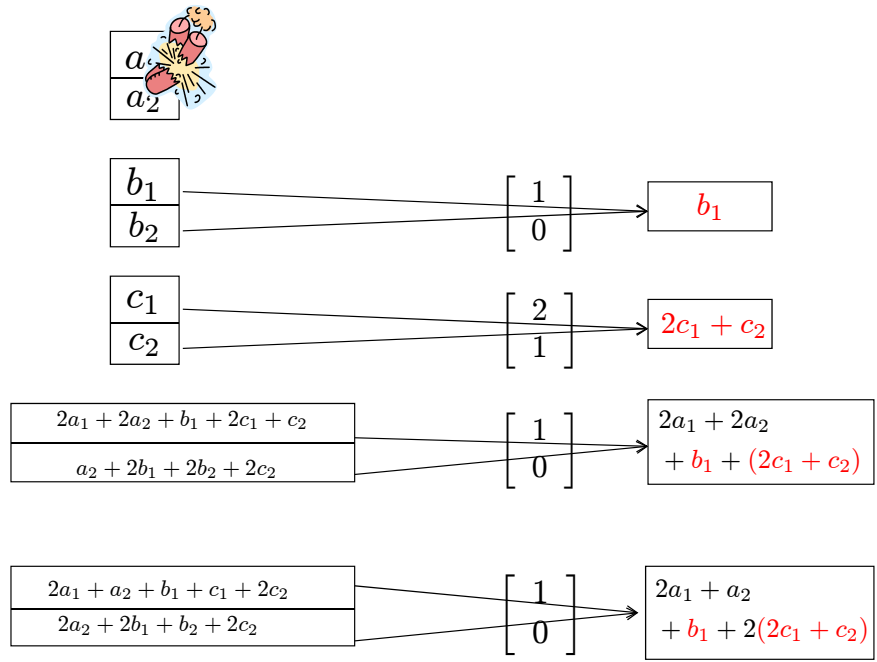, angle=0, width=0.7\textwidth}}
\end{center}
\caption{Illustration of exact repair of node 1 for a $(5,3)$ E-MSR code defined over ${\sf GF}(3)$. The eigenvector-based interference alignment scheme enables to decode 2 desired unknowns $(a_1,a_2)$ from 4 equations containing 6 unknowns. Notice that interference ``$\mathbf{b}$'' and ``$\mathbf{c}$'' are aligned simultaneously although the same projection vectors $\mathbf{v}_{\alpha 3}$ and $\mathbf{v}_{\alpha 4}$ are used.} \label{fig:53ex_GF3}
\end{figure}

\begin{example}
Fig.~\ref{fig:53ex_GF3} illustrates exact repair of node 1 $(a_1,a_2)$ for a $(5,3)$ E-MSR code defined over ${\sf GF}(3)$. Notice that interference ``$\mathbf{b}$'' and ``$\mathbf{c}$'' are aligned simultaneously. One can check exact repair of the remaining four nodes based on our proposed method.
\end{example}


\section{Conclusion}
We have systematically developed interference alignment techniques that attain the cutset-based MSR point (\ref{eq-MSRpoint}) under exact repair constraints of \emph{all} nodes. Based on the proposed framework, we provided a generalized family of codes for the cases: $(a)$ $\frac{k}{n} \leq \frac{1}{2}$; $(b)$ $k \leq 3$, for arbitrary $n \geq k$. This generalized family of codes provides insights into a dual relationship between the  systematic and parity node repair, as well as opens up a larger constructive design space of solutions. For $(5,3)$ codes which do not satisfy $\frac{k}{n} \leq \frac{1}{2}$, we have developed an eigenvector-based interference alignment to show the optimality of the cutset bound. Unlike wireless communication problems, our storage repair problems have more flexibility in designing encoding matrices which correspond to wireless channel coefficients (provided by nature) in communication problems. Exploiting this fact, we developed interference alignment techniques for optimal exact repair codes in distributed storage systems.

\appendices

\section{Proof of Lemma~\ref{lemma:TensorElementaryMatrix}}
\label{appendix:lemma_TensorEM}

It suffices to show that
\begin{align*}
\left[
   \begin{array}{ccc}
     \mathbf{A}_1'  & \mathbf{A}_2' & \mathbf{A}_3' \\
     \mathbf{B}_1'  & \mathbf{B}_2' & \mathbf{B}_3' \\
\mathbf{C}_1' & \mathbf{C}_2' & \mathbf{C}_3' \\
   \end{array}
 \right]  \left[
   \begin{array}{ccc}
     \mathbf{A}_1  & \mathbf{A}_2 & \mathbf{A}_3 \\
     \mathbf{B}_1  & \mathbf{B}_2 & \mathbf{B}_3 \\
\mathbf{C}_1 & \mathbf{C}_2 & \mathbf{C}_3 \\
   \end{array}
 \right]   = \left[
   \begin{array}{ccc}
     \mathbf{I}  & \mathbf{0} & \mathbf{0} \\
     \mathbf{0}  & \mathbf{I} & \mathbf{0} \\
 \mathbf{0}  & \mathbf{0} & \mathbf{I} \\
   \end{array}
 \right].
\end{align*}
Using (\ref{eq:63_EncodingMatrices}) and (\ref{eq-63_PrimeEM}), we compute:
\begin{align*}
& (1-\kappa^2)(\mathbf{A}_1' \mathbf{A}_1 + \mathbf{A}_2' \mathbf{B}_1  + \mathbf{A}_3' \mathbf{C}_1)  = \left( \mathbf{v}_1' \mathbf{u}_1'^t  - \kappa^2 \alpha_1' \mathbf{I} \right)   ( \mathbf{u}_1 \mathbf{v}_1^t + \alpha_1 \mathbf{I} )  \\
&\qquad +   \left( \mathbf{v}_2' \mathbf{u}_1'^t  - \kappa^2 \beta_1' \mathbf{I} \right) ( \mathbf{u}_1 \mathbf{v}_2^t + \beta_1 \mathbf{I} )  + \left( \mathbf{v}_3' \mathbf{u}_1'^t  - \kappa^2 \gamma_1' \mathbf{I} \right) ( \mathbf{u}_1 \mathbf{v}_3^t + \gamma_1 \mathbf{I} )    \\
& \quad \overset{(a)}= (\mathbf{v}_1' \mathbf{v}_1^t + \mathbf{v}_2' \mathbf{v}_2^t + \mathbf{v}_3' \mathbf{v}_3^t)   + (\alpha_1 \mathbf{v}_1' + \beta_1 \mathbf{v}_2' + \gamma_1 \mathbf{v}_3') \mathbf{u}_1'^t - \kappa^2 \mathbf{u}_1 ( \alpha_1' \mathbf{v}_1 + \beta_1' \mathbf{v}_2 + \gamma_1' \mathbf{v}_3 )^t - \kappa^2 \mathbf{I} \\
& \quad \overset{(b)}= (\mathbf{v}_1' \mathbf{v}_1^t + \mathbf{v}_2' \mathbf{v}_2^t + \mathbf{v}_3' \mathbf{v}_3^t)   + \kappa \mathbf{u}_1 \mathbf{u}_1'^t - \kappa^2 \mathbf{u}_1 ( \alpha_1' \mathbf{v}_1 + \beta_1' \mathbf{v}_2 + \gamma_1' \mathbf{v}_3 )^t - \kappa^2 \mathbf{I} \\
& \quad \overset{(c)}= (\mathbf{v}_1' \mathbf{v}_1^t + \mathbf{v}_2' \mathbf{v}_2^t + \mathbf{v}_3' \mathbf{v}_3^t)   - \kappa^2 \mathbf{I} \\
& \quad \overset{(d)}=  (1- \kappa^2) \mathbf{I} \\
\end{align*}
where $(a)$ follows from $\alpha_1 \alpha_1' + \beta_1 \beta_1' + \gamma_1 \gamma_1' = 1$ due to (\ref{eq-63-MinvM}); $(b)$ follows from (\ref{eq:uiconditions}); $(c)$ follows from
$\mathbf{u}_1' = 2 ( \alpha_1' \mathbf{v}_1 + \beta_1' \mathbf{v}_2 + \gamma_1' \mathbf{v}_3 )$ (See Claim~\ref{claim:u-and-v}); and $(d)$ follows from the fact that $\mathbf{v}_1' \mathbf{v}_1^t + \mathbf{v}_2' \mathbf{v}_2^t + \mathbf{v}_3' \mathbf{v}_3^t = \mathbf{I}$, since $(\mathbf{v}_1', \mathbf{v}_2', \mathbf{v}_3')$ are dual basis vectors.

Similarly, one can check that $\mathbf{B}_1' \mathbf{A}_2 + \mathbf{B}_2' \mathbf{B}_2  + \mathbf{B}_3' \mathbf{C}_2 = \mathbf{I}$ and $\mathbf{C}_1' \mathbf{A}_3 + \mathbf{C}_2' \mathbf{B}_3  + \mathbf{C}_3' \mathbf{C}_3 = \mathbf{I}$. Now let us compute one of the cross terms:
\begin{align*}
&(1-\kappa^2) \mathbf{A}_1' \mathbf{A}_2 + \mathbf{A}_2' \mathbf{B}_2  + \mathbf{A}_3' \mathbf{C}_2  = \left( \mathbf{v}_1' \mathbf{u}_1'^t  -\kappa^2 \alpha_1'  \mathbf{I} \right)   ( \mathbf{u}_2 \mathbf{v}_1^t + \alpha_2 \mathbf{I} )  \\
&\qquad +   \left( \mathbf{v}_2' \mathbf{u}_1'^t  -\kappa^2 \beta_1' \mathbf{I} \right) ( \mathbf{u}_2 \mathbf{v}_2^t + \beta_2 \mathbf{I} )  + \left( \mathbf{v}_3' \mathbf{u}_1'^t  - \kappa^2 \gamma_1' \mathbf{I} \right) ( \mathbf{u}_2 \mathbf{v}_3^t + \gamma_2 \mathbf{I} )    \\
& \quad \overset{(a)}=    (\alpha_2 \mathbf{v}_1' + \beta_2 \mathbf{v}_2' + \gamma_2 \mathbf{v}_3') \mathbf{u}_1'^t - \kappa^2 \mathbf{u}_2 ( \alpha_1' \mathbf{v}_1 + \beta_1' \mathbf{v}_2 + \gamma_1' \mathbf{v}_3 )^t \\
& \quad \overset{(b)}=  0 \\
\end{align*}
where $(a)$ follows from $\mathbf{u}_i'^t \mathbf{u}_j = \delta (i-j)$ and $<(\alpha_1', \beta_1', \gamma_1'), (\alpha_2, \beta_2, \gamma_2)> = 0$; $(b)$ follows from (\ref{eq:uiconditions}) and Claim~\ref{claim:u-and-v}. Similarly, we can check that the other cross terms are zero matrices. This completes the proof.

\begin{claim}
\label{claim:u-and-v}
For all $i$, $\mathbf{u}_i' = \kappa ( \alpha_i' \mathbf{v}_1 + \beta_i' \mathbf{v}_2 + \gamma_i' \mathbf{v}_3 ) $.
\end{claim}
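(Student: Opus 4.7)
The plan is to read off the claim directly from the hypothesis $\kappa\mathbf{U}=\mathbf{V}'\mathbf{M}$ by inverting the relation and matching columns. Writing the assumption column-wise,
\begin{align*}
\kappa\mathbf{u}_j = \alpha_j\mathbf{v}_1' + \beta_j\mathbf{v}_2' + \gamma_j\mathbf{v}_3' \qquad (j=1,2,3),
\end{align*}
so $\mathbf{u}_j$ is an explicit linear combination of the dual basis $\{\mathbf{v}'\}$ with coefficients given by the $j$-th column of $\mathbf{M}$. Since $\mathbf{M}$ is invertible by (\ref{eq-63-MinvM}) and $\{\mathbf{v}'\}$ is a basis, $\{\mathbf{u}_1,\mathbf{u}_2,\mathbf{u}_3\}$ is linearly independent, and hence its dual basis $\{\mathbf{u}_1',\mathbf{u}_2',\mathbf{u}_3'\}$ is uniquely determined by the relations $\mathbf{u}_i'^t\mathbf{u}_j=\delta(i-j)$.

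The key step is therefore to exhibit a candidate and verify the duality relation. Define
\begin{align*}
\mathbf{w}_i := \kappa\bigl(\alpha_i'\mathbf{v}_1 + \beta_i'\mathbf{v}_2 + \gamma_i'\mathbf{v}_3\bigr),
\end{align*}
and compute $\mathbf{w}_i^t\mathbf{u}_j$. Expanding both factors and using $\mathbf{v}_k^t\mathbf{v}_l' = \mathbf{v}_l'^t\mathbf{v}_k = \delta(k-l)$, all cross-terms vanish and the $\kappa$'s cancel, leaving
\begin{align*}
\mathbf{w}_i^t\mathbf{u}_j = \alpha_i'\alpha_j + \beta_i'\beta_j + \gamma_i'\gamma_j = \bigl\langle (\alpha_i',\beta_i',\gamma_i'),\,(\alpha_j,\beta_j,\gamma_j)\bigr\rangle = \delta(i-j),
\end{align*}
the last equality being the dual-basis property of the rows of $\mathbf{M}^{-1}$ against the columns of $\mathbf{M}$. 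By uniqueness of the dual basis, $\mathbf{u}_i' = \mathbf{w}_i$, which is exactly the claim.

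Equivalently, one can do it purely in matrix form: from $\kappa\mathbf{U}=\mathbf{V}'\mathbf{M}$ one has $\mathbf{U}^t = \kappa^{-1}\mathbf{M}^t(\mathbf{V}')^t$, so $\mathbf{U}' := (\mathbf{U}^t)^{-1} = \kappa\,\mathbf{V}\,(\mathbf{M}^{-1})^t$, and since the $i$-th column of $(\mathbf{M}^{-1})^t$ is $(\alpha_i',\beta_i',\gamma_i')^t$ by the definition in the lemma, the $i$-th column of $\mathbf{U}'$ is precisely $\kappa(\alpha_i'\mathbf{v}_1+\beta_i'\mathbf{v}_2+\gamma_i'\mathbf{v}_3)$. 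There is no real obstacle here; the only thing to be careful about is bookkeeping: making sure the roles of rows versus columns of $\mathbf{M}$ and of $\mathbf{M}^{-1}$ match the indexing conventions set up for $(\alpha_i,\beta_i,\gamma_i)$ and $(\alpha_i',\beta_i',\gamma_i')$ in the statement of Lemma~\ref{lemma:TensorElementaryMatrix}.
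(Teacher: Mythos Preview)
Your proposal is correct and takes essentially the same approach as the paper: the paper writes $\mathbf{U}=\kappa^{-1}\mathbf{V}'\mathbf{M}$, inverts to obtain $[\mathbf{u}_1'^t;\mathbf{u}_2'^t;\mathbf{u}_3'^t]=\kappa\,\mathbf{M}^{-1}[\mathbf{v}_1^t;\mathbf{v}_2^t;\mathbf{v}_3^t]$, and reads off the rows---which is exactly your ``matrix form'' argument, while your first verification via $\mathbf{w}_i^t\mathbf{u}_j=\delta(i-j)$ is the same computation unpacked componentwise.
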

\begin{proof}
By (\ref{eq:uiconditions}), we can rewrite
\begin{align*}
[\mathbf{u}_1, \mathbf{u}_2, \mathbf{u}_3]  = \frac{1}{\kappa} [\mathbf{v}_1', \mathbf{v}_2', \mathbf{v}_3'] \left[ \begin{array}{ccc}
    \alpha_{1} & \alpha_{2} & \alpha_{3}  \\
     \beta_{1} & \beta_{2} &  \beta_{3} \\
    \gamma_{1} & \gamma_{2} & \gamma_{3} \\
  \end{array} \right].
\end{align*}
Using the fact that  $(\mathbf{u}_1', \mathbf{u}_2', \mathbf{u}_3')$ are dual basis vectors, we get
\begin{align*}
\left[ \begin{array}{c}
    \mathbf{u}_1'^t  \\
    \mathbf{u}_2'^t \\
    \mathbf{u}_3'^t \\
  \end{array} \right]
  =  \kappa \left[
  \begin{array}{ccc}
    \alpha_{1}' & \beta_{1}' & \gamma_{1}'  \\
     \alpha_{2}' &  \beta_{2}' &  \gamma_{2}' \\
    \alpha_{3}' & \beta_{3}'   & \gamma_{3}' \\
  \end{array}
\right]  \left[ \begin{array}{c}
    \mathbf{v}_1^t  \\
    \mathbf{v}_2^t \\
    \mathbf{v}_3^t \\
  \end{array} \right].
\end{align*}
This completes the proof.
\end{proof}

\section{Proof of Theorem \ref{theorem-2kk}}
\label{appen-Theorem2kk}

For generalization, we are forced to use some heavy notation but only for this section and the related appendices. Let $\mathbf{w}_j$ be a $k$-dimensional message vector for information unit $j$. Let $\mathbf{w}_j'$ be the newly mapped information unit after remapping. Let $\mathbf{G}_{j}^{(i)}$ be an encoding matrix for parity node $i$, associated with the $j$th information unit. Let $\mathbf{G}_{j}'^{(i)}$ be the newly mapped entity.

\subsection{Exact Repair of Systematic Nodes}
For exact repair of systematic node $i$, we have each survivor node project their data
\begin{align*}
\mathbf{G}_l^{(i)} \mathbf{v}_i' &= m_l^{(i)} \mathbf{v}_i' + \mathbf{u}_l, \\
 \mathbf{G}_l^{(j)} \mathbf{v}_i' &= m_l^{(j)} \mathbf{v}_i'.
\end{align*}
Therefore, we can achieve simultaneous interference alignment for non-intended signals, while guaranteeing the decodability of desired signals.

\subsection{Exact Repair of Parity Nodes}
The idea is the same as that of Theorem~\ref{theorem-63}. The detailed procedures are as follow.
First we remap parity nodes into new variables:
\begin{align*}
\left[
   \begin{array}{c}
     \mathbf{w}_1'\\
     \mathbf{w}_2'\\
      \vdots \\
  \mathbf{w}_k'\\
   \end{array}
 \right]:=
\left[
   \begin{array}{cccc}
     \mathbf{G}_1^{(1)t}  & \mathbf{G}_2^{(1)t} & \cdots & \mathbf{G}_k^{(1)t} \\
     \mathbf{G}_1^{(2)t}  & \mathbf{G}_2^{(2)t} & \cdots & \mathbf{G}_k^{(2)t} \\
      \vdots             & \vdots              & \ddots & \vdots \\
     \mathbf{G}_1^{(k)t}  & \mathbf{G}_2^{(k)t} & \cdots & \mathbf{G}_k^{(k)t} \\
   \end{array}
 \right] \left[
   \begin{array}{c}
     \mathbf{w}_1\\
     \mathbf{w}_2\\
      \vdots \\
  \mathbf{w}_k \\
   \end{array}
 \right].
\end{align*}
Define the newly remapped encoding matrices as:
\begin{align}
\left[
   \begin{array}{cccc}
     \mathbf{G}_1'^{(1)}  & \mathbf{G}_1'^{(2)} & \cdots & \mathbf{G}_1'^{(k)} \\
     \mathbf{G}_2'^{(1)}  & \mathbf{G}_2'^{(2)} & \cdots & \mathbf{G}_2'^{(k)} \\
      \vdots              & \vdots              & \ddots & \vdots \\
     \mathbf{G}_k'^{(1)}  & \mathbf{G}_k'^{(2)} & \cdots & \mathbf{G}_k'^{(k)} \\
   \end{array}
 \right]: = \left[
   \begin{array}{cccc}
     \mathbf{G}_1^{(1)}  & \mathbf{G}_1^{(2)} & \cdots & \mathbf{G}_1^{(k)} \\
     \mathbf{G}_2^{(1)}  & \mathbf{G}_2^{(2)} & \cdots & \mathbf{G}_2^{(k)} \\
      \vdots             & \vdots              & \ddots & \vdots \\
     \mathbf{G}_k^{(1)}  & \mathbf{G}_k^{(2)} & \cdots & \mathbf{G}_k^{(k)} \\
   \end{array}
 \right]^{-1}.
\end{align}
We can now apply the generalization of  Lemma~\ref{lemma:TensorElementaryMatrix} to obtain the dual structure:

\begin{align*}
\begin{split}
\mathbf{G}_{1}'^{(1)} &= \frac{1}{1- \kappa^2} \left( \mathbf{v}_{1}' \mathbf{u}_1'^t -\kappa^2 m_1'^{(1)} \mathbf{I}\right),
\cdots,
\mathbf{G}_{k}'^{(1)} = \frac{1}{1-\kappa^2} \left( \mathbf{v}_{1}' \mathbf{u}_k'^t -\kappa^2 m_1'^{(k)} \mathbf{I}\right), \\
&\vdots \qquad \qquad \qquad  \ddots \qquad \qquad \qquad  \vdots \\
\mathbf{G}_{1}'^{(k)} &= \frac{1}{1-\kappa^2} \left(\mathbf{v}_{k}' \mathbf{u}_1'^t -\kappa^2 m_k'^{(1)} \mathbf{I} \right),
\cdots,
\mathbf{G}_{k}'^{(k)} = \frac{1}{1-\kappa^2} \left( \mathbf{v}_{k}' \mathbf{u}_k'^t -\kappa^2 m_k'^{(k)} \mathbf{I} \right), \\
\end{split}
\end{align*}
where the dual basis vectors are defined as:
\begin{align*}
\left[
                \begin{array}{cccc}
                  m_1'^{(1)} & m_2'^{(1)} & \cdots    & m_{k}'^{(1)} \\
                  m_1'^{(2)} & m_2'^{(2)} & \cdots    & m_{k}'^{(2)} \\
                  \vdots    & \vdots    & \ddots    & \vdots \\
                  m_1'^{(k)} & m_2'^{(k)} & \cdots    & m_{k}'^{(k)} \\
                \end{array}
              \right]: = \left[
                \begin{array}{cccc}
                  m_1^{(1)} & m_1^{(2)} & \cdots    & m_{1}^{(k)} \\
                  m_2^{(1)} & m_2^{(2)} & \cdots    & m_{2}^{(k)} \\
                  \vdots    & \vdots    & \ddots    & \vdots \\
                  m_k^{(1)} & m_k^{(2)} & \cdots    & m_{k}^{(k)} \\
                \end{array}
              \right]^{-1}.
\end{align*}

Let us check exact repair of parity node $i$. We choose projection vectors as $\mathbf{u}_i$.
Then, $\forall l=1,\cdots, k$, we get:
\begin{align*}
(1-\kappa^2)\mathbf{G}_l'^{(i)} \mathbf{u}_i &= -\kappa^2 m_i^{(l)} \mathbf{u}_i + \mathbf{v}_l', \\
(1-\kappa^2) \mathbf{G}_l'^{(j)} \mathbf{u}_i &= -\kappa^2 m_j^{(l)} \mathbf{u}_i.
\end{align*}
Therefore, we can achieve simultaneous interference alignment for non-intended signals, while guaranteeing the decodability of desired signals.

\subsection{The MDS-Code Property}
We check the invertibility of a composite encoding matrix when a Data Collector connects to $i$ systematic nodes and $(k-i)$ parity nodes for $i=0,\cdots,k$. The main idea is to use a Gaussian elimination method as we did in Section~\ref{sec-MDScodeProperty}. The verification is tedious and therefore  details are omitted.

\subsection{Minimum Required Finite-Field Size}
Note that the dimension of an encoding matrix is $k$-by-$k$. Therefore, the minimum finite-field size required to generate a Cauchy matrix is $2k$, i.e., $q \geq 2k$.

\section{Proof of Lemma \ref{lemma-(53)}}
\label{appen-theorem53}

\subsection{Exact Repair}
With the Gaussian elimination method, we get
\begin{align}
\begin{split}
\mathbf{A}_1' &= \left[
  \begin{array}{cc}
    \frac{1}{\alpha} &  \frac{1}{\beta}  \\
    \frac{1}{\alpha} & 0 \\
  \end{array}
\right], \mathbf{B}_1' = \left[
   \begin{array}{cc}
    \frac{1}{\alpha} &  \frac{2}{\beta}  \\
    \frac{1}{\alpha} & 0 \\
  \end{array}
\right], \mathbf{C}_1' = \left[
  \begin{array}{cc}
    0  & \frac{2 \alpha}{\beta} \\
   \frac{2 \beta}{ \alpha }    &1 \\
  \end{array}
\right], \\
\mathbf{A}_2' &= \left[
  \begin{array}{cc}
     0 & \frac{1}{\beta} \\
 \frac{1}{\alpha}&  \frac{1}{\beta}  \\
  \end{array}
\right],
\mathbf{B}_2' = \left[
  \begin{array}{cc}
     0 &  \frac{2}{\beta} \\
 \frac{1}{\alpha} &  \frac{2}{\beta} \\
  \end{array}
\right],
\mathbf{C}_2' = \left[
  \begin{array}{cc}
     0 &  \frac{2 \alpha}{\beta}  \\
  \frac{2 \beta}{ \alpha }   &1 \\
  \end{array}
\right].
\end{split}
\end{align}
Using this, we can easily check the the existence of eigenvectors (\ref{eq-53conditions3}) and decodabiity of desired signals (\ref{eq-53conditions}). This completes the proof.

\subsection{The MDS-Code Property}
Obviously, all the encoding matrices are invertible due to their lower-triangular or upper-triangular structure. We consider three cases where a Data Collector connects to (1) 3 systematic nodes; (2) 2 systematic nodes and 1 parity node; and (3) 1 systematic node and 2 parity nodes. The first is a trivial case where the composite matrix associated with information units is an identity matrix.
The second case is also trivial, since each encoding matrix is invertible so that the composite matrix is invertible as well. For the last case, we consider
\begin{align}
 \left[
   \begin{array}{ccc}
     \mathbf{A}_1 & \mathbf{A}_2 & \mathbf{0} \\
      \mathbf{B}_1 & \mathbf{B}_2 & \mathbf{0}\\
      \mathbf{C}_1 & \mathbf{C}_2 & \mathbf{I} \\
   \end{array}
 \right]= \left[
  \begin{array}{cc|cc|cc}
    2\alpha &  0&  2\alpha     &0    &0          &0 \\
  2\beta &  \beta &  \beta     &2\beta    &0          &0 \\
\hline
  \alpha & 2\alpha  &  \alpha     &2 \alpha    &0          &0 \\
  0 &  2\beta &  0     &\beta    &0          &0 \\
\hline
2\alpha &  0 &  \alpha     &0    &1          &0 \\
  \beta &  2\beta &  2\beta    &2\beta    &0          &1 \\
  \end{array}
\right].
\end{align}
It is easy to check the invertibility of this matrix via the Gaussian elimination method.
The invertibility for all the cases guarantees the MDS property.

\section*{Acknowledgment}
We gratefully acknowledge Prof. P. V. Kumar (of IISc) and his students, N. B. Shah and  K. V. Rashmi, for insightful discussions and fruitful collaboration related to the structure of exact regeneration codes.

\bibliographystyle{IEEEtran}
\bibliography{Storage_IA}

\end{document}